\newcommand{\FF}{\mathcal{F}}
\newcommand{\RR}{\mathscr{R}}
\newcommand{\TT}{\mathbb{T}}
\newcommand{\BB}{\mathbb{B}}
\newtheorem{Proposition}{Proposition}
\newtheorem{Definition}{Definition}
\newtheorem{Lemma}{Lemma}
\newtheorem{Theorem}{Theorem}
\title{Computation of immediate neighbours of monotone Boolean functions}
\newbox{\orcid}\sbox{\orcid}{\includegraphics[scale=0.06]{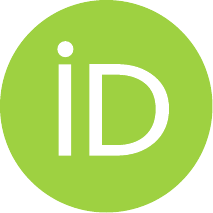}} 
\author[1]{%
	\href{https://orcid.org/0000-0001-7740-8539}{\usebox{\orcid}\hspace{1mm}
    Jos\'e E. R. Cury\thanks{These authors contributed equally to this work and share first authorship}}%
}
\author[2]{%
    Patrícia Tenera Roxo$^*$
}
\author[2]{%
	\href{https://orcid.org/0000-0002-4205-2189}{\usebox{\orcid}\hspace{1mm}
    Vasco Manquinho}%
}
\author[3]{%
	\href{https://orcid.org/0000-0003-2350-0756}{\usebox{\orcid}\hspace{1mm}Claudine Chaouiya\thanks{\texttt{Claudine.Chaouiya@univ-amu.fr}}}%
}
\author[2]{%
	\href{https://orcid.org/0000-0002-7934-5495}{\usebox{\orcid}\hspace{1mm}
    Pedro T. Monteiro\thanks{\texttt{Pedro.Tiago.Monteiro@tecnico.ulisboa.pt}}}%
}
\affil[1]{DAS, Universidade Federal de Santa Catarina, Florian\'opolis, Brazil}
\affil[2]{INESC-ID / IST - Universidade de Lisboa, Lisboa, Portugal}
\affil[3]{Aix Marseille Univ, CNRS, I2M, Marseille, France}
\begin{document}
\maketitle


\begin{abstract}
Boolean networks constitute relevant mathematical models to study the behaviours of genetic and signalling networks. These networks define regulatory influences between molecular nodes, each being associated to a Boolean variable and a regulatory (local) function specifying its dynamical behaviour depending on its regulators.
However, existing data is mostly insufficient to adequately parametrise a model, that is to uniquely define a regulatory function for each node. With the intent to support model parametrisation, this paper presents results on the set of Boolean functions compatible with a given regulatory structure, {\it i.e.} the partially ordered set of monotone non-degenerate Boolean functions. More precisely, we present original rules to obtain the direct neighbours of any function of this set.
Besides a theoretical interest, presented results will enable the development of more efficient methods for Boolean network  synthesis and revision, benefiting from the progressive exploration of the vicinity of regulatory functions.
\end{abstract}

\keywords{Regulatory networks \and Boolean functions \and Partial order \and Discrete dynamics}

\section{\label{sec:intro}Introduction}

{Boolean or multi-valued models} have been successful in assessing dynamical properties of {biological regulatory networks} \cite{abou-jaoude2016}. {Seminal work by S. Kauffman \cite{kauffman1969} and R. Thomas \cite{thomas1973} have been pursued with a significant range of studies leading to theoretical results or formalism extensions (e.g., among many others \cite{abou-jaoude2019,aracena2004,aracena2017,Didier:2011aa,garg2009,pauleve2020,shmulevich2002,thieffry1999,zanudo2013,remy2008}) and to computational tools for model development and analyses (e.g., \cite{boolnet,ginsim2018,colomoto2018}). There is also a long history of modelling studies for a variety of biological processes such as cell division cycle \cite{Li:2004aa,faure2007}, cell differentiation during the fly development \cite{Albert2003,sanchez2008,sanchez2001}, immune T helper cell differentiation\cite{mendoza2006,Naldi2010}, tumour cell migration \cite{Cohen2015,Selvaggio2020},  and many more. While the definition of such models does not require quantitative kinetic parameters, it still implies the specification of {the (logical) regulatory functions} to describe the combined effects of regulators upon their targets.
Data on the mechanisms underlying regulatory mechanisms are still scarce, and modellers often rely on generic regulatory functions; for instance, a component is activated if at least one activator is present and no inhibitors are present \cite{mendoza2006}, or if the weighted sum of its regulator activities is above a specific threshold (e.g., \cite{Bornholdt:2008aa,Li:2004aa}).

Here, we focus on {Boolean models}, and we address {the following} questions: 1) how complex is the parametrisation of a Boolean model consistent with the regulatory structure defined by a digraph with no multiple edges, and 2) how to modify the parametrisation so as to minimise the changes in the dynamical properties. The latter amounts to disclose the structure of the set of candidate functions that is, as argued below, the set of non-degenerate monotone Boolean functions.

Given a component $g$, element of a Boolean model, we characterise the set  $\mathcal{F}_g$ of the Boolean regulatory functions compatible with its regulatory structure, {\it i.e.,} with the number and signs of its regulators, which are either activators or inhibitors.
Generically, if $g$ has $n$ regulators, one can in principle define $2^{2^n}$ potential Boolean regulatory functions. This number is then reduced when imposing the functionality of the interactions ({\it i.e.}, all variables associated with the regulators are essential), and a fixed sign of these interactions. We focus on monotone Boolean functions \cite{thieffry1999,Gedeon2024}, {\it i.e.,} each interaction has a fixed sign (positive when its source is an activator, or negative when its source in an inhibitor), and signs of the literals correspond to those of the associated interactions.
However, there is no closed expression of the number of monotone increasing Boolean functions on $n$ variables, known as the Dedekind number \cite{korshunov2003,stephen2014}. Actually, it is even unknown for $n>9$, and its value for $n=9$ was determined only very recently \cite{vanhirtum2023computation,jakel2023}. Hence, even if the functionality constraint further restricts the number of Boolean functions compatible with a given regulatory structure, this number can still be astronomical.
The set $\mathcal{F}_g$   is the set of the monotone (positive or negative in each of its variables), non-degenerate Boolean functions. As set inclusion defines a partially ordered set  $(\mathcal{F}_g,\preceq)$ by considering the True sets of the Boolean functions, the resulting lattice can be visualised on a Hasse diagram. In this work, we propose an original algorithm to explore paths in this diagram, that is to determine the local neighbouring functions of any function in the set $(\mathcal{F}_g,\preceq)$.

Section \ref{sec:preliminaries} introduces some preliminaries.
In Section \ref{sec:po_in_f}, we characterise the set $\mathcal{F}_g$ of regulatory functions consistent with the regulatory structure of a given gene $g$.
Sections \ref{sec:parents} and \ref{sec:children} characterise the immediate neighbours of any function in $(\mathcal{F}_g,\preceq)$.
Section \ref{sec:implementation} proposes two algorithms to compute the immediate parents and children of any function in $\mathcal{F}_g$.
The paper ends with some conclusions and prospects in Section \ref{sec:conclusion}.

\section{\label{sec:preliminaries}Background}

This section includes essential definitions and notation on Boolean functions and Boolean networks. 
For further detail on the notions introduced here, we refer to relevant textbooks \cite{caspard2012}. Basics on  \emph{Partially Ordered Sets} (\emph{poset}) are provided in Appendix \ref{app:posets}. 


Considering the set $\BB = \{0,1\}$, let $\BB^{p}$ denote the set of $p$-dimensional vectors $\mathbf{x} = (x_1,\dots,x_p)$ with entries in $\BB$.

{A Boolean function} $f: \BB^{p} \rightarrow \BB$ is \emph{positive} (respectively \emph{negative}) in $x_i$ if $f|_{x_{i}=0} \le f|_{x_{i}=1}$ (respectively $f|_{x_{i}=0} \ge f|_{x_{i}=1}$), where $f|_{x_{i}=0}$ (respectively $f|_{x_{i}=1}$) denotes the value of $f(x_1,\dots,x_{i-1},0,x_{i+1},\dots,x_p)$ (respectively $f(x_1,\dots,x_{i-1},1,x_{i+1},\dots,x_p)$). 

We say that $f$ is \emph{monotone in $x_i$} if it is either positive or negative in $x_i$. It is \emph{monotone} if it is monotone in $x_i$ for all $i \in \{1,\dots,p\}$, and it is \emph{positive} (respectively \emph{negative}) if it is positive in all its variables \cite{crama_hammer}.

Determining the number $M(p)$ of positive Boolean functions for $p$ variables is known as \emph{Dedekind's problem}. This number, also called Dedekind number, is equivalent to the number of antichains in the poset $(2^{\{1,\ldots,p\}},\subseteq)$. $M(p)$ has been computed for values of $p$ up to 9, while asymptotic estimates have been proposed for higher values \cite{caspard2012}.

A variable $x_i$ is an \emph{essential} variable of a Boolean function $f$ if there is at least one $\mathbf{x} \in \BB^{p}$ such that $f|_{x_{i}=0} \ne f|_{x_{i}=1}$. A Boolean function is said to be \emph{non-degenerate} if it does not have fictitious variables, that is, all variables are \emph{essential} \cite{shmulevich2002}.

Given a Boolean function $f: \BB^{p} \rightarrow \BB$,  $\TT(f)$ denotes the set of vectors $\mathbf{x} \in \BB^{p}$ for which $f(\mathbf{x}) = 1$; in other words, $\TT(f)$ is the {\it True set} of $f$ \cite{crama_hammer,korshunov2003}.

There are many ways to represent a Boolean function. One such way is by means of a Disjunctive Normal Form (DNF). A function $f$ is said to be in DNF if it is expressed as a disjunction of conjunctions of literals. A conjunction is an \emph{implicant} of $f$ if it implies $f$. It is a \emph{prime implicant}, if it is minimal ({\it i.e.,} the removal of any literal - a non-complemented or complemented variable) results in a non-implicant of $f$. As a DNF of a function is in general not unique, we focus on the \emph{Complete Disjunctive Normal Form} (CDNF) variation, which is the disjunction of all its \emph{prime implicants}. Any Boolean function can be uniquely represented by its CDNF \cite{crama_hammer}, and in the remainder of this paper, Boolean functions are thus assumed to be expressed in their CDNFs.



 A \emph{Boolean Network} (BN) is defined by a triplet $\RR=(G, R,\mathcal{F})$, where:
\begin{itemize}
  \item $G = \{g_{i}\}_{i = 1, \ldots, n}$ is the set of $n$ regulatory components, each $g_{i}$ being associated to a Boolean variable $x_i$ in $\BB$ that denotes the activity state of $g_i$. The set $\BB^{n}$ defines the state space of $\RR$, and $\mathbf{x} = (x_1, \dots, x_{n}) \in \BB^{n}$ defines a state of the model;
  \item $R\subseteq G\times G\times\{+,-\}$ is the set of interactions, $(g_i,g_j)$, together with the effect  $g_i$ has on  $g_j$. $(g_i,g_j,+)$ denotes an activatory effect, and $(g_i,g_j,-)$ denotes an inhibitory effect of $g_j$;
  \item $\mathcal{F} = \{f_{_i}\}_{i = 1, \ldots, n}$ is the set of regulatory Boolean functions; $f_{i}: \BB^{n} \rightarrow \BB$ defines the target level of $g_{i}$ for each state $\mathbf{x} \in \BB^{n}$.
\end{itemize}

In the corresponding {\em regulatory graph} $(G,R)$, nodes represent regulatory components ({\it e.g.} genes) and directed edges represent signed regulatory interactions (positive for activations and negative for inhibitions).
Figure~\ref{fig:toyBN} shows an example of a regulatory graph with 3 components: a mutual inhibition between $g_2$ and $g_3$, and a self-activation of $g_1$, which is further activated by $g_2$ and repressed by $g_3$.

 The set of the regulators of $g_i$ is denoted $G_i\,=\,\{g_j\in G, (g_j,g_i,+)\!\in\! R \mbox{ or } (g_j,g_i,-)\!\in\! R\}$.
Note that the regulatory function of $g_i$ may be defined over the states of its regulators (rather than over the states of the full set of components): $\forall g_i\in G,\, f_i:\BB^{|G_i|}\rightarrow \BB$; it thus specifies how regulatory interactions are combined to affect the state of $g_i$. In other words, one can define the regulatory functions over only their essential variables.

\begin{figure*}[t]
  \centering
\begin{tabular}{|c|@{}c|@{}c|}\hline
\multicolumn{1}{|l|}{\bf A}&\multicolumn{1}{l|}{\bf B}&\multicolumn{1}{l|}{\bf C}\\
\begin{minipage}{0.25\textwidth}
\begin{tikzpicture}[scale=0.55]
  \path(0,0)    node[draw, fill=white!20,shape=circle](G1) {$g_1$};
  \path(4,0)    node[draw, fill=white!20,shape=circle](G2) {$g_2$};
  \path(2,-2.7) node[draw, fill=white!20,shape=circle](G3) {$g_3$};
  \draw[->, thick] (G1) edge[very thick,green, loop left] (G1);
  \draw[->, thick] (G2) edge[very thick,green,shorten >=1.5pt] (G1);
  \draw[-|, thick,shorten >=1.5pt] (G3) edge[very thick,red] (G1);
  \draw[-|, thick,shorten >=1.5pt] (G3) edge[very thick,red, bend left=30] (G2);
  \draw[-|, thick,shorten >=1.5pt] (G2) edge[very thick,red, bend left=30] (G3);
\end{tikzpicture}
\end{minipage}
&\begin{minipage}{0.29\textwidth}\centering
\resizebox{0.73\textwidth}{!}{%
\begin{tikzpicture}
  \tikzstyle{stgstate} = [draw,shape=rectangle,font=\small, fill, color=gray!20]
  \tikzstyle{stgedge} = [-latex, thick,font=\sffamily\normalsize\bfseries]

  \def\incx{1.7}
  \def\incy{1.7}
  \colorlet{darkgreen}{green!50!black}

  \node[stgstate] (000) at (0*\incx,0*\incy) {\textcolor{black}{000}};
  \node[stgstate] (100) at (1*\incx,0*\incy) {\textcolor{black}{100}};
  \node[draw,shape=rectangle,font=\small, fill, color=red!60] (001) at (0.6*\incx,0.5*\incy) {\textcolor{black}{001}};
  \node[draw,shape=rectangle,font=\small, fill, color=red!60] (101) at (1.6*\incx,0.5*\incy) {\textcolor{black}{101}};

  \node[stgstate] (010) at (0*\incx,1*\incy) {\textcolor{black}{010}};
  \node[draw,shape=rectangle,font=\small, fill, color=red!60] (110) at (1*\incx,1*\incy) {\textcolor{black}{110}};
  \node[stgstate] (011) at (0.6*\incx,1.5*\incy) {\textcolor{black}{011}};
  \node[stgstate] (111) at (1.6*\incx,1.5*\incy) {\textcolor{black}{111}};
  \draw[stgedge] (010) edge (110);
    \draw[stgedge] (000) edge (100);

  \draw[stgedge] (000) edge (010);
  \draw[stgedge] (100) edge (110);
  \draw[stgedge] (011) edge (001);
  \draw[stgedge] (111) edge (101);
  \draw[stgedge] (000) edge (001);
  \draw[stgedge] (100) edge (101);
  \draw[stgedge] (011) edge (010);
  \draw[stgedge] (111) edge (110);
\end{tikzpicture}
}
$\begin{array}{l}\\
f_1(x_1,x_2,x_3) = x_1 \vee x_2 \vee \neg  x_3\\
f_2(x_3) = \neg x_3\\
f_3(x_2) = \neg x_2
\end{array}$\end{minipage}
&\begin{minipage}{0.29\textwidth}\centering

\resizebox{0.73\textwidth}{!}{%
\begin{tikzpicture}
  \tikzstyle{stgstate} = [draw,shape=rectangle,font=\small, fill, color=gray!20]
  \tikzstyle{stgedge} = [-latex, thick,font=\sffamily\normalsize\bfseries]

  \def\incx{1.7}
  \def\incy{1.7}
  \colorlet{darkgreen}{green!50!black}

  \node[stgstate] (000) at (0*\incx,0*\incy) {\textcolor{black}{000}};
  \node[stgstate] (100) at (1*\incx,0*\incy) {\textcolor{black}{100}};
  \node[draw,shape=rectangle,font=\small, fill, color=red!60] (001) at (0.6*\incx,0.5*\incy) {\textcolor{black}{001}};
  \node[draw,shape=rectangle,font=\small, fill, color=red!60] (101) at (1.6*\incx,0.5*\incy) {\textcolor{black}{101}};

  \node[stgstate] (010) at (0*\incx,1*\incy) {\textcolor{black}{010}};
  \node[draw,shape=rectangle,font=\small, fill, color=red!60] (110) at (1*\incx,1*\incy) {\textcolor{black}{110}};
  \node[stgstate] (011) at (0.6*\incx,1.5*\incy) {\textcolor{black}{011}};
  \node[stgstate] (111) at (1.6*\incx,1.5*\incy) {\textcolor{black}{111}};
  \draw[stgedge] (010) edge (110);
  \draw[stgedge] (000) edge (010);
  \draw[stgedge] (100) edge (110);
  \draw[stgedge] (011) edge (001);
  \draw[stgedge] (111) edge (101);
  \draw[stgedge] (000) edge (001);
  \draw[stgedge] (100) edge (101);
  \draw[stgedge] (011) edge (010);
  \draw[stgedge] (111) edge (110);
\end{tikzpicture}
}
$\begin{array}{l}\\
f_1(x_1,x_2,x_3) = x_1 \vee (x_2\! \wedge \neg  x_3)\\
f_2(x_3) = \neg x_3\\
f_3(x_2) = \neg x_2
\end{array}$\vspace{0.1cm}
\end{minipage}\\\hline
\end{tabular}
\caption{Example of a Boolean Network with: (A) the regulatory graph, where normal (green) arrows represent activations and hammerhead (red) arrows represent inhibitions; (B-C) asynchronous state transition graphs, considering  Boolean regulatory functions consistent with the regulatory graph in (A). Sole the function of $g_1$ differs, leading to the loss of a transition from panel (B) to (C). Stable states (fixed points of the regulatory functions) are denoted in red.
}
\label{fig:toyBN}
\end{figure*}

A BN defines a dynamics represented by a {\em State Transition Graph} (STG), where each node represents a state $\mathbf{x}\in \BB^n$, and directed edges represent transitions between states. It depends on an updating mode, which can be synchronous, asynchronous as defined by R. Thomas \cite{thomas91a}, or others \cite{abou-jaoude2016,Le-Novere:2015aa,Pauleve2021,thomas91a}. For instance, the asynchronous STG encompasses a transition between a state $\mathbf{x}$ to a state $\mathbf{x'}$ iff 
$$\left\{\begin{array}{l}
\exists i \in \{1,\ldots,n\},\, f_i(\mathbf{x})=\neg x_i=x^{\prime}_i ,\\
  \forall j \in \{1,\ldots,n\},\, j\neq i,\,f_j(\mathbf{x})=x^{\prime}_j. \\
\end{array}\right.$$

Figure \ref{fig:toyBN} illustrates that changing the regulatory function of a component leads to the addition or the loss of transitions.

\section{\label{sec:po_in_f}Characterising the set of consistent regulatory functions}

Given a generic component $g_i$ of a BN, we first characterise the regulatory functions that comply with the interactions targeting $g_i$. We then discuss some properties of the set of such functions, as well as its cardinality.

\subsection{\label{subsec:consist_f} Consistent regulatory functions are non-degenerate monotone Boolean functions}

Consider $\RR=(G,R,\FF)$ a BN and $g_i\in G$ with  $G_i$ its set of $|G_i|$ regulators.
There are $2^{2^{|G_i|}}$ Boolean functions over the $|G_i|$ variables associated to the regulators of $g_i$. However, we can restrict ourselves to the sole functions  that comply with the regulatory structure of $g_i$, {\it i.e.,} that reflect the signs and functionalities of the regulations affecting $g_i$ \cite{abou-jaoude2016,Gedeon2024,thieffry1999}.
Recall that we consider the restricted class of BN with no dual regulations, {\it i.e.}, any regulator is either an activator  or an inhibitor.

An interaction $(g_j,g_i)$ is said to be {\em functional} if $\exists \mathbf{x} \in \mathbb{B}^n : f_i(\mathbf{x})|_{x_{j}=0}\neq f_i(\mathbf{x})|_{x_{j}=1}$, and positive (respectively, negative) if $\forall \mathbf{x} \in \mathbb{B}^n : f_i(x)|_{x_{j}=0}\leq f_i(\mathbf{x})|_{x_{j}=1}$ (respectively if $\forall \mathbf{x} \in \mathbb{B}^n : f_i(\mathbf{x})|_{x_{j}=0}\geq f_i(\mathbf{x})|_{x_{j}=1}$). Whenever $g_j\in G_i$, interaction $(g_j,g_i)$ must be functional. Moreover, as $(g_j,g_i)$ must comply with a prescribed sign, it is either positive or negative. In other words, whenever $g_j$ is a regulator of $g_i$, $x_j$ is an essential variable of $f_i$ and $f_i$ is monotone in $x_j$.

The set of regulators $G_i$ can thus be partitioned as $G_i = G^{+}_i \cup G^{-}_i$, where $G^{+}_i$ is the set of positive regulators of $g_i$ (activators), while components in $G^{-}_i$ are negative regulators of $g_i$ (inhibitors). For the example in Figure \ref{fig:toyBN}, we have: $G_{1} = \{g_1,g_2,g_3\}$, $G^{+}_{1} = \{g_1,g_2\}$ and $G^{-}_{1} = \{g_3\}$, $G_{2} = G^{-}_{2} = \{g_3\}$, $G_{3} = G^{-}_{3} = \{g_2\}$ and $G^{+}_{2} = G^{+}_{3} = \varnothing$.

Given $g_i$, the definition below characterises $\mathcal{F}^i$,  the set of all {\it consistent Boolean regulatory functions}, {\it i.e.} the functions that comply with the regulatory structure defined by ($G^{+}_i,G^{-}_i$). 

\begin{Definition}\label{def:consistency}
The set $\mathcal{F}^i$ of consistent Boolean regulatory functions of component $g_i$ is the set of non-degenerate monotone Boolean (NDMB) functions $f_i$ such that, $f_i$ is positive in $x_k$ for $g_k \in G^{+}_i$ and negative in $x_k$ for $g_k \in G^{-}_i$.
\end{Definition}

Monotonicity derives from the non-duality assumption (an interaction is either positive or negative), and the sign of the interaction from a regulator $g_k$ enforces the positiveness (if $g_k \in G^{+}_i$) or negativeness (if $g_k \in G^{-}_i$). Finally, regulatory functions must be non-degenerate due to the requirement of the functionality of all $g_k\in G_i$. We thus focus on NDMB functions describing the evolution of regulated components \cite{Gedeon2024,thieffry1999}. 

Let CDNF$(f_i)=\bigvee_{j=1}^{m} \left(\bigwedge_{k \in s_j} u_k\right)$ denote the CDNF of the regulatory function $f_i$. CDNF$(f_i)$ satisfies:

\begin{align*}
        &\text{(i) }\, \forall g_k \in G_i, \exists j \text{ such that } k \in s_j;\\
        &\text{(ii) }\, \forall j,\,\forall k\in s_j,\,u_k =
\left\{
  \begin{array}{ll}
    x_k , & \hbox{if $g_k \in G^{+}_i$,} \\
    \neg x_k , & \hbox{if $g_k \in G^{-}_i$.}
  \end{array}
\right.
\end{align*}
Both conditions $(i)$ and $(ii)$ agree with Definition \ref{def:consistency}: $(i)$ enforces the functionality of all regulators in $G_i$; $(ii)$ guarantees the consistency of the function with the sign of the regulatory interaction $(g_k,g_i)$. Note that by the definition of CDNF no two $s_j$, $s_l$ ($j \neq l$) are such that $s_j \subset s_l$.

\subsection{Set representation and number of consistent regulatory functions}
Given the regulatory structure defined by $G_i$, any NDMB function $f_i \in \mathcal{F}^i$ can be unambiguously represented by its {set-representation}, as defined below.
Necessary concepts on posets and antichains can be found in Appendix \ref{app:posets}.

\begin{Definition}
Given a component $g_i$ with $G_i = G^{+}_i\, \cup \,G^{-}_i$ its set of $p$ regulators, the set-representation $S(f_i) \subseteq 2^{\{1, \dots, |G_i|\}}$ of the regulatory function $f_i \in \mathcal{F}^i$ is such that $s_j \in S(f_i)$ if and only if $\left (\bigwedge_{k\in s_j}u_k\right )$ is a conjunctive clause of the CDNF representation of $f_i$.
\end{Definition}

Given a function $f$ in CDNF, each clause is a prime implicant of $f$, accounted for by a unique set in the set representation.
In the definition above, $S(f_i)$ represents the structure of $f_i$ as its elements indicate which variables (regulators) are involved in each of the clauses defining $f_i$. The literals are then unambiguously determined by $G^{+}_i$ and $G^{-}_i$. For example, the set-representation of $f_{1}(x_1,x_2,x_3) = x_1 \vee (x_2 \land \neg x_3)$ is $S(f_1) = \{\{1\},\{2,3\}\}$.

Since elements of $S(f_i)$ are pairwise incomparable subsets of $\{1,\dots, |G_i|\}$, for the $\subseteq$ relation, it is easy to verify that $S(f_i)$ is an antichain in the {poset} $(2^{\{1,\dots, |G_i|\}},\subseteq)$. Moreover, $S(f_i)$ is also a cover of $\{1,\dots, |G_i|\}$ since all indices must appear in at least one element of $S(f_i)$. Finally, any antichain in $(2^{\{1,\dots, |G_i|\}},\subseteq)$ which is a cover of $\{1,\dots, |G_i|\}$ {is the set representation} of a unique function in $\mathcal{F}^i$. Therefore, $\mathcal{F}^i$, the set of consistent Boolean regulatory functions of $g_i$, is isomorphic to $\mathcal{S}_{|G_i|}$, the set of antichains in $(2^{\{1,\dots, |G_i|\}},\subseteq)$ that are also a cover of $\{1,\dots, |G_i|\}$.

As "when a monotone function is neither positive nor negative, it can always be brought to one of these two forms by an elementary change of variables" \cite{crama_hammer}, from now on, we will restrict ourselves to (non-degenerated monotone) positive functions. We will also denote $\mathcal{F}_p$ the set of such functions over $p$ variables. 

\begin{figure*}[t]
\resizebox{\textwidth}{!}{%
\begin{tabular}{c|r|r}
    $p$ & \multicolumn{1}{c}{$M(p)$} &\multicolumn{1}{c}{$N(p)=|\mathcal{F}_p| = |\mathcal{S}_p|$}\\ \hline
    1 & 3&1\\
    2 & 6 & 2\\
    3 & 20 & 9\\
    4 & 168 & 114\\
    5 & 7 581 & 6 894\\
    6 & 7 828 354 & 7 785 062\\
    7 & 2 414 682 040 998 & 2 414 627 396 434\\
    8 & 56 130 437 228 687 557 907 788 & 56 130 437 209 370 320 359 968\\
    9 & 286 386 577 668 298 411 128 469 151 667 598 498 812 366 &
        286 386 577 668 298 410 623 295 216 696 338 374 471 993
  \end{tabular}}
  \caption{Numbers of positive Boolean functions (Dedekind number $M(p)$) and of non-degenerate  positive Boolean functions ($N(p)$) of $p=1,\dots, 9$ variables. $N(p)$ is also the number of antichain covers of $\{1,\dots, p\}$.\label{fig:nfunctions}}
\end{figure*}

The cardinality $N(p)$ of $\mathcal{F}_p$, set of all \textit{non-degenerate monotone positive Boolean} (NDPB) functions of $p$ variables, is smaller than $2^{2^{p}}$, the number of all Boolean functions of $p$ variables and also than $M(p)$, the Dedekind number of monotone positive Boolean functions (including degenerate functions.
Indeed, one can easily show that:
 $$N(p)=M(p)-2-\sum_{k=1}^{p-1}\frac{p!}{k!(p-k)!}N(k).$$

Nevertheless, as illustrated in Figure \ref{fig:nfunctions}, the cardinality of $\mathcal{F}_p$ dramatically increases with the number of variables ($p$ regulators) and thus constitutes a major computational challenge.
Any approach relying on the exploration of the full set $\mathcal{F}_p$ where $p > 5$ would be intractable.
In this context, the possibility to iteratively navigate within $\mathcal{F}_p$ is crucial to assess the impact of particular regulatory functions on the dynamics of the corresponding BN, in a computational tractable manner.

\subsection{Partially Ordered Set of non-degenerate monotone Boolean functions}
Considering $\mathcal{F}_p$ the set of positive NDMB functions over $p$ variables, we show that it is clearly a {poset} by considering the binary relation $\preceq$ on $\mathcal{F}_p \times \mathcal{F}_p$ defined by:

$$\forall f,f^\prime \in \mathcal{F}_p,\, f \preceq\ f^{\prime} \iff \TT(f) \subseteq \TT(f^{\prime}).$$

It is easy to verify that $(\mathcal{F}_p,\preceq)$ is a {poset}. Figure \ref{fig:HD_3} shows the Hasse Diagram (HD) of the {poset} $(\mathcal{F}_{|G_1|},\preceq)$ of $g_1$, a component of the model presented in Figure \ref{fig:toyBN}.

\begin{figure}[!ppt]
  \centering
  \resizebox{0.5\textwidth}{!}{%
\begin{tikzpicture}
  \tikzstyle{hdstate} = [ellipse,draw,blue,font=\bfseries\small]
  \tikzstyle{hdedge} = [gray,font=\sffamily\normalsize\bfseries]
  \tikzstyle{hdfunc} = [red,font=\large]

  \def\xspace{4}
  \def\yspace{1.5}

  \node[hdstate] (1-2-3) at (0,4*\yspace) {\{\{1\},\{2\},\{3\}\}};

  \node[hdstate] (3-12) at (-1*\xspace,3*\yspace) {\{\{3\},\{1,2\}\}};
  \node[hdstate] (2-13) at (0,3*\yspace) {\{\{2\},\{1,3\}\}};
  \node[hdstate] (1-23) at (1*\xspace,3*\yspace) {\{\{1\},\{2,3\}\}};

  \node[hdstate] (12-13-23) at (0,2*\yspace) {\{\{1,2\},\{1,3\},\{2,3\}\}};

  \node[hdstate] (12-23) at (-1*\xspace,1*\yspace) {\{\{1,2\},\{2,3\}\}};
  \node[hdstate] (12-13) at (0,1*\yspace) {\{\{1,2\},\{1,3\}\}};
  \node[hdstate] (23-13) at (1*\xspace,1*\yspace) {\{\{1,3\},\{2,3\}\}};

  \node[hdstate] (123) at (0,0) {\{\{1,2,3\}\}};

  \draw[hdedge] (3-12) edge node[black] {} (1-2-3);
  \draw[hdedge] (2-13) edge node[black] {} (1-2-3);
  \draw[hdedge] (1-23) edge node[black] {} (1-2-3);

  \draw[hdedge] (12-13-23) edge node[black] {} (3-12);
  \draw[hdedge] (12-13-23) edge node[black] {} (2-13);
  \draw[hdedge] (12-13-23) edge node[black] {} (1-23);

  \draw[hdedge] (12-23) edge node[black] {} (12-13-23);
  \draw[hdedge] (12-13) edge node[black] {} (12-13-23);
  \draw[hdedge] (23-13) edge node[black] {} (12-13-23);

  \draw[hdedge] (123) edge node[black] {} (12-23);
  \draw[hdedge] (123) edge node[black] {} (12-13);
  \draw[hdedge] (123) edge node[black] {} (23-13);

  \path[hdfunc] (1-2-3) ++(3.8,0) node {sup $F_{g_1} = s_1 \vee s_2 \vee \neg s_3$};
  \path[hdfunc] (123) ++(3.2,0) node {inf $F_{g_1} = s_1 \wedge s_2 \wedge \neg s_3$};
\end{tikzpicture}
  }%
  \caption{Hasse Diagram representing the set of all possible functions composed of 3 regulators ({\it e.g.} functions in red of the component $g_1$ of the model in Figure \ref{fig:toyBN}).}\label{fig:HD_3}
\end{figure}

Observe that, while the functions in $\mathcal{F}_p$ depend on the specific regulatory structure ({\it i.e.}, the signs of the regulations), the topology of the {HD} and the relation between its nodes, when seen as set-representations, only depend on $p$, the number of regulators of $g_i$.  In other words, the HD shown in Figure \ref{fig:HD_3} represents the set of consistent regulatory functions for any component with 3 regulators.

In fact, one can consider the relation $\preceq$ on the set $\mathcal{S}_p$ of antichains in $(2^{\{1,\dots, p\}},\subseteq)$:
$$\forall S, S^\prime \in \mathcal{S}_p,\, S\preceq S^{\prime} \iff \forall s \in S,\, \exists s^{\prime} \in S^{\prime} \text{ such that } s^{\prime} \subseteq s .$$

The set $s^\prime$ is said to be a \emph{witness} in $S^\prime$ for $s$. The above equivalence can be restated as follows: $S \preceq S^\prime$, if and only if, every set in $S$ has a witness in $S^\prime$.

Recall that $(\mathcal{S}_p,\preceq)$ is also a {poset}. Its {HD} has the same structure as the HD of $(\mathcal{F}_p,\preceq)$, where its nodes are the set-representations $S(f) \in \mathcal{S}$. This is because:
\begin{equation} \label{eq:two_relations}
f \preceq f^{\prime} \iff S(f) \preceq S(f^{\prime}).
\end{equation}

Summarising, the {poset} $(\mathcal{S}_p,\preceq)$ can be used as a template for all {posets} $(\mathcal{F}_p,\preceq)$ of regulatory functions of a component with $p$ regulators, considering any possible regulatory structures, {\it i.e.}, all pairs $(G^{+}_i,G^{-}_i)\in G$. In what follows, properties of {posets} $(\mathcal{F}_p,\preceq)$ will thus be derived from those of $(\mathcal{S}_p,\preceq)$.

Given a generic component with $p$ regulators, we introduce the following terminology on the relationships between elements in the HD of the {poset} $(\mathcal{S}_p,\preceq)$. This terminology also applies to $(\mathcal{F}_p,\preceq)$). 

\begin{def}\label{def:parents_etc}
Given $S, S^\prime \in \mathcal{S}_p$:
\begin{itemize}
\item $S^\prime$ is a \emph{parent} of $S$ in $(\mathcal{S}_p,\preceq)$ if and only if $S \prec S^\prime$;
\item $S^\prime$ is an \emph{immediate parent} of $S$ in $(\mathcal{S}_p,\preceq)$ if and only if $S^\prime$ is a parent of $S$ and $\nexists S^{\prime\prime} \in \mathcal{S}_p$ such that $S \!\prec\! S^{\prime\prime}\! \!\prec\! S^\prime$;
\item $S^\prime$ is a \emph{child} of $S$ in $(\mathcal{S}_p,\preceq)$ if and only if $S^\prime \prec S$;
\item $S^\prime$ is an \emph{immediate child} of $S$ in $(\mathcal{S}_p,\preceq)$ if and only if $S^\prime$ is a {child} of $S$ and $\nexists S^{\prime\prime} \in \mathcal{S}_p$ such that $S^\prime \prec S^{\prime\prime} \prec S$.
\end{itemize}
\end{def}

The next two sections provide rules to determine the sets of the immediate neighbours of elements in $(\mathcal{S}_p,\preceq)$. 

\section{Immediate parents of an element of $(\mathcal{S}_p,\preccurlyeq)$\label{sec:parents}}
\subsection{Rules to compute immediate parents}


Given an element $S$ of $\mathcal{S}_p$, a parent $S^\prime$ of $S$ is obtained by applying one of the following rules (example in Figure~\ref{fig:rules-example} left).

\noindent\fbox{%
	\begin{minipage}{\textwidth}
	\begin{center}{\sc rules to compute parents}\end{center}
	\begin{enumerate}[start=1,label={\hspace*{0.0cm} Rule \arabic*:},wide =0pt,leftmargin=0.2cm,parsep=3pt,topsep=0pt,font=\sc]
	
			\item[Rule 1] $S^\prime = S \cup \{\sigma\}$ where $\sigma \subset \{1,\dots p\}$ is a maximal set independent of $S$.
      \item[Rule 2] $S^\prime =(S\setminus \{s_1, \dots, s_k\}) \cup \{\sigma\}$, where:
      \begin{enumerate}[parsep=3pt,topsep=0pt]
			\item $\sigma$ is a maximal set dominated by $S$;
      \item $\sigma$ is not contained in any maximal set independent of $S$;
			\item $\forall i = 1, \ldots, k$, $\sigma$ is contained in $s_i$;
			\item $S^\prime$ is a cover of $\{1,\dots,p\}$. 
			
		\end{enumerate}

		\item[Rule 3] $S^\prime =(S\setminus \{s\}) \cup \{\sigma_1, \sigma_{2}\}$, where:
	
		\begin{enumerate}[parsep=3pt,topsep=0pt]
      \item Both $\sigma_1$ and $\sigma_2$ are maximal sets dominated by $S$;
      \item Both $\sigma_1$ and $\sigma_2$ are not contained in any maximal set independent of $S$;
      \item Both $\sigma_1$ and $\sigma_2$ are contained in $s$;
      \item Neither $(S\setminus \{s\}) \cup \{\sigma_1\}$ nor $(S\setminus \{s\}) \cup \{\sigma_2\}$ are a cover of $\{1,\dots,p\}$.
		\end{enumerate}
	\end{enumerate}
\end{minipage}}
\smallskip

\begin{Theorem}\label{theorem:parents}
$S'$ is an immediate parent of $S$ in $\mathcal{S}_p$ if and only if $S'$ is generated by one of the 3 rules to compute parents presented above.
\end{Theorem}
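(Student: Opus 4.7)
The plan is to exploit a simple structural decomposition of $S' \setminus S$ and $S \setminus S'$ controlled by the antichain--cover constraints defining $\mathcal{S}_p$. The key preliminary observation I would establish is: if $S \preceq S'$ in $\mathcal{S}_p$, then every $s' \in S'$ is either a subset of some $s \in S$ (``witness-type'') or incomparable to every element of $S$ (``independent-type''); indeed, if $s' \supsetneq s$ for some $s \in S$, then the witness for $s$ in $S'$ would be a proper subset of $s'$, breaking the antichain. This gives a canonical partition $S' \setminus S = N_W \sqcup N_I$, and forces each $s \in S \setminus S'$ to admit at least one witness in $N_W$.

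For the ($\Leftarrow$) direction I would verify that each rule yields a valid antichain cover (noting that Rules 2 and 3 implicitly require $\{s_1,\ldots,s_k\}$ to be exactly the elements of $S$ strictly containing $\sigma$, so as to preserve the antichain property) and then rule out any strict intermediate $S''$: any independent-type element of $S'' \setminus S$ containing $\sigma$ contradicts condition (b); any witness-type element of $S''\setminus S$ properly above $\sigma$ yet still dominated contradicts the maximality of $\sigma$ in condition (a); and in Rule 3 the non-cover condition (d) blocks the single-element intermediate $(S \setminus \{s\}) \cup \{\sigma_j\}$. For the ($\Rightarrow$) direction I case-split on $|N_I|$ and $|N_W|$. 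If $|N_I| \geq 1$, pick $\mu \in N_I$: then $S \cup \{\mu\}$ is an antichain cover with $S \prec S \cup \{\mu\} \preceq S'$, so immediacy forces $S \cup \{\mu\} = S'$, i.e.\ $N_I = \{\mu\}$, $N_W = \emptyset$, and $S \subseteq S'$; maximality of $\mu$ as an independent set follows by the same intermediate argument, giving Rule 1. If $N_I = \emptyset$ and $|N_W| = 1$ with $N_W = \{\sigma\}$, the antichain property pins down $S \setminus S' = \{s \in S : \sigma \subsetneq s\}$, and conditions (a), (b), (d) of Rule 2 are extracted by intermediate-production arguments that mirror the $(\Leftarrow)$ reasoning (larger dominated $\sigma' \supsetneq \sigma$ would give a strict intermediate; a maximal independent $\mu \supseteq \sigma$ would give $S \cup \{\mu\}$; covering is automatic since $S' \in \mathcal{S}_p$).

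The main obstacle is the remaining case $|N_W| \geq 2$ with $N_I = \emptyset$: one must show that the only immediate configuration is $|N_W| = 2$ with both $\sigma_1, \sigma_2 \subsetneq s$ for a single $s \in S \setminus S' = \{s\}$, and that conditions (a), (b), (d) of Rule 3 then hold. The strategy is to consider, for each $t \in N_W$, the antichain $(S \setminus R_t) \cup \{t\}$ with $R_t := \{s \in S : t \subsetneq s\}$; this lies strictly between $S$ and $S'$ in the $\preceq$-order \emph{unless} it fails to cover $\{1,\ldots,p\}$, so immediacy forces each such single replacement to miss some element, and the missing elements must be supplied by another element of $N_W$. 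Making this coupling between the $\sigma_j$'s rigorous is the technically delicate core of the proof: one must rule out $|N_W| \geq 3$ by constructing a covering $2$-element intermediate, and rule out $|N_W| = 2$ with $\sigma_1, \sigma_2$ in different parent sets by a similar covering argument; once the Rule 3 skeleton is pinned down, conditions (a), (b), (d) fall out from the same intermediate-construction tricks used earlier.
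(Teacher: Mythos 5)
Your overall architecture is the same as the paper's: verify that each rule produces a valid antichain cover, prove immediacy of each rule's output by deriving a contradiction from a hypothetical intermediate $S''$, and prove completeness by a case analysis on how $S'$ differs from $S$. Your partition $S'\setminus S = N_W\sqcup N_I$ (dominated-type versus independent-type) is a clean, explicit formulation of a fact the paper uses only implicitly, and your treatment of the cases $|N_I|\geq 1$ (Rule 1) and $N_I=\varnothing$, $|N_W|=1$ (Rule 2) is correct and matches the paper's reasoning, including the observations that the antichain condition pins down $S\setminus S'=\{s\in S:\sigma\subsetneq s\}$ and that condition (d) of Rule 2 is automatic for an element of $\mathcal{S}_p$.

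The genuine gap is exactly where you place it: the completeness case $N_I=\varnothing$, $|N_W|\geq 2$ is planned but not proved. Closing it requires showing (i) that every $t\in N_W$ is a maximal set dominated by $S$ and not contained in any maximal independent set --- the paper does this by exhibiting the intermediate $\bigl(S\setminus\{s\in S:\sigma\subseteq s\}\bigr)\cup\{\sigma\}\cup\{t_i\in N_W: t_i\not\subseteq\sigma\}$, whose retained terms guarantee covering; (ii) that $|N_W|=2$; and (iii) that both elements of $N_W$ lie in one and the same $s$ with $S\setminus S'=\{s\}$. Your proposed lever --- each single replacement $(S\setminus R_t)\cup\{t\}$ must fail to cover, so the lost indices must be supplied by the other element of $N_W$ --- is the right one and does go through: if $x$ is lost by the replacement for $\sigma_1$, then $x\in\sigma_2$ and $x$ occurs in no element of $S\setminus R_{\sigma_1}$, forcing $R_{\sigma_2}\subseteq R_{\sigma_1}$; by symmetry $R_{\sigma_1}=R_{\sigma_2}$, every common superset contains $\sigma_1\cup\sigma_2$, and the cardinality constraint $|\sigma_i|=|s|-1$ from maximal domination collapses this to the single set $s=\sigma_1\cup\sigma_2$, after which a third element of $N_W$ is excluded because $(S\setminus\{s\})\cup\{\sigma_1,\sigma_2\}$ would already be a covering strict intermediate. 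As written, however, this is a statement of intent rather than a proof, and it is the crux of the entire completeness direction. (It is worth noting that the paper's own proof is terse at precisely this point --- it establishes (i) and then asserts the conclusion --- so you have correctly located where the real work lies, but you have not carried it out.)
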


Proof of Theorem \ref{theorem:parents} can be found in Appendix \ref{app:proof_th1}. It first ensures that a set resulting from any of the three rules is indeed a parent of the starting set $S$. It remains to prove that it is an immediate parent of $S$. For Rule 1, immediacy is easily proved by contradiction. For Rules 2 and 3, we first prove by contradiction that for any set $S'$ generated by these rules, if a set $S''$ is such that $S \preccurlyeq S'' \preccurlyeq S'$ then $S\not\subseteq S^{\prime\prime}$. This is then used to prove, again by contradiction, that $S'$ is an immediate parent of $S$. Finally, it remains to prove that any immediate parent is generated by one of the three rules. This is proved by considering two cases: an immediate parent not generated by our rules contains $S$ or not. In the first case, a contradiction easily arises. In the second case, the proof considers a set $s$ element of $S$ that is not in $S'$ and its witness $s'$ in $S'$ ({\emph i.e.} $s'\subset s$). Then, reasoning on the possible forms of $S$, a contradiction arises.

\begin{figure*}[t]
	\begin{center}
		\begin{tabular}{|c|c|}\hline
\begin{minipage}{0.46\textwidth}\centering {\small 
$\begin{array}{l@{\hspace{0cm}}r@{\hspace{0cm}}ll}
{\rm set}&\multicolumn{3}{l}{ S_1 = \{\{1,2,3\},\{3,4\}\}} \\
\multirow{2}{*}{parents}&\ldelim \{ {2}{3 mm} 
 & \{\{1,2,3\},\{1,2,4\},\{3,4\}\}&Rule1\\
& & \{\{1,3\},\{2,3\},\{3,4\}\}&Rule3\\
\end{array}$}
\end{minipage}
&
\begin{minipage}{0.45\textwidth}\resizebox{\textwidth}{!}{
$\begin{array}{l@{\hspace{0.1cm}}r@{\hspace{0.1cm}}ll}
{\rm set}& \multicolumn{3}{l}{S_2 = \{\{1,2,3\},\{1,2,4\},\{3,4\}\}} \\
\multirow{3}{*}{children}&\ldelim \{ {3}{4 mm} 
 &\{\{1,2,4\},\{3,4\}\}&Rule1\\
& &\{\{1,2,3\},\{3,4\}\}&Rule1\\
& &\{\{1,2,3\},\{1,2,4\},\{1,3,4\},\{2,3,4\}\}&Rule2\\
\end{array}$}
\end{minipage}\\
\begin{minipage}{0.45\textwidth}
	\resizebox{\textwidth}{!}{ 
		\begin{tikzpicture}
			\tikzstyle{term} = [draw,shape=rectangle,minimum height=2em,text centered, fill, color=gray!20]
			\tikzstyle{term2} = [draw,shape=rectangle,minimum height=2em,text centered, fill, color=orange]
			\tikzstyle{term32} = [draw,ellipse,minimum height=2em,text centered, fill, color=green!50]
			\tikzstyle{term4} = [draw,shape=rectangle,minimum height=2em,text centered, fill, color=violet!50]
			\tikzstyle{term42} = [draw,ellipse,minimum height=2em,text centered, fill, color=violet!50]
			\tikzstyle{term5} = [draw,shape=rectangle,minimum height=2em,text centered, fill, color=blue!50]
			\tikzstyle{link} = [-]
			\def\xspace{1.5}
			\def\yspace{-1.5}
			
			\node[term] (1234) at (0,0){\textcolor{black}{$\{1,2,3,4\}$}};
			\node[term2](123) at (-2.5*\xspace,\yspace) {\textcolor{black}{$\{1,2,3\}$}};
			\node[term32](124) at (-1*\xspace,\yspace) {\textcolor{black}{$\{1,2,4\}$}};
			\node[term2](134) at (1*\xspace,\yspace) {\textcolor{black}{$\{1,3,4\}$}};
			\node[term2](234) at (2.5*\xspace,\yspace) {\textcolor{black}{$\{2,3,4\}$}};
			\node[term4](12) at (-4*\xspace,2*\yspace) {\textcolor{black}{$\{1,2\}$}};	
			\node[term42](13) at (-2.5*\xspace,2*\yspace) {\textcolor{black}{$\{1,3\}$}};
			\node[term4](14) at (-1*\xspace,2*\yspace) {\textcolor{black}{$\{1,4\}$}};
			\node[term42](23) at (1*\xspace,2*\yspace) {\textcolor{black}{$\{2,3\}$}};
			\node[term4](24) at (2.5*\xspace,2*\yspace) {\textcolor{black}{$\{2,4\}$}};	
			\node[term42](34) at (4*\xspace,2*\yspace) {\textcolor{black}{$\{3,4\}$}};	
			\node[term4](1) at (-2.5*\xspace,3*\yspace) {\textcolor{black}{$\{1\}$}};
			\node[term4](2) at (-1*\xspace,3*\yspace) {\textcolor{black}{$\{2\}$}};
			\node[term4](3) at (1*\xspace,3*\yspace) {\textcolor{black}{$\{3\}$}};
			\node[term4](4) at (2.5*\xspace,3*\yspace) {\textcolor{black}{$\{4\}$}};
			\node[term4](0) at (0,4*\yspace) {\textcolor{black}{$\{\emptyset\}$}};
			
			\draw[link] (1234) edge (123);
			\draw[link] (1234) edge (124);
			\draw[link] (1234) edge (134);	
			\draw[link] (1234) edge (234);	
			\draw[link] (123) edge (12);
			\draw[link] (123) edge (13);
			\draw[link] (123) edge (23);
			\draw[link] (124) edge (12);
			\draw[link] (124) edge (14);
			\draw[link] (124) edge (24);	
			\draw[link] (134) edge (13);
			\draw[link] (134) edge (14);
			\draw[link] (134) edge (34);
			\draw[link] (234) edge (23);
			\draw[link] (234) edge (34);
			\draw[link] (234) edge (24);	
			\draw[link] (1) edge (12);
			\draw[link] (1) edge (13);
			\draw[link] (1) edge (14);
			\draw[link] (2) edge (12);
			\draw[link] (2) edge (23);
			\draw[link] (2) edge (24);
			\draw[link] (3) edge (13);
			\draw[link] (3) edge (23);
			\draw[link] (3) edge (34);
			\draw[link] (4) edge (14);
			\draw[link] (4) edge (24);
			\draw[link] (4) edge (34);
			\draw[link] (0) edge (1);
			\draw[link] (0) edge (2);
			\draw[link] (0) edge (3);
			\draw[link] (0) edge (4);
	\end{tikzpicture}}
\end{minipage}& \begin{minipage}{0.45\textwidth}
	\resizebox{\textwidth}{!}{ 
		\begin{tikzpicture}
			\tikzstyle{term} = [draw,shape=rectangle,minimum height=2em,text centered, fill, color=gray!20]
			\tikzstyle{term2} = [draw,shape=rectangle,minimum height=2em,text centered, fill, color=orange]
			\tikzstyle{term3} = [draw,shape=rectangle,minimum height=2em,text centered, fill, color=green!50]
			\tikzstyle{term32} = [draw,ellipse,minimum height=2em,text centered, fill, color=green!50]
			\tikzstyle{term4} = [draw,shape=rectangle,minimum height=2em,text centered, fill, color=violet!50]
			\tikzstyle{term42} = [draw,ellipse,minimum height=2em,text centered, fill, color=violet!50]
			\tikzstyle{term5} = [draw,shape=rectangle,minimum height=2em,text centered, fill, color=blue!50]
			\tikzstyle{link} = [-]
			\def\xspace{1.5}
			\def\yspace{-1.5}
			
			\node[term] (1234) at (0,0){\textcolor{black}{$\{1,2,3,4\}$}};
			\node[term2](123) at (-2.5*\xspace,\yspace) {\textcolor{black}{$\{1,2,3\}$}};
			\node[term32](124) at (-1*\xspace,\yspace) {\textcolor{black}{$\{1,2,4\}$}};
			\node[term](134) at (1*\xspace,\yspace) {\textcolor{black}{$\{1,3,4\}$}};
			\node[term](234) at (2.5*\xspace,\yspace) {\textcolor{black}{$\{2,3,4\}$}};
			\node[term4](12) at (-4*\xspace,2*\yspace) {\textcolor{black}{$\{1,2\}$}};	
			\node[term42](13) at (-2.5*\xspace,2*\yspace) {\textcolor{black}{$\{1,3\}$}};
			\node[term3](14) at (-1*\xspace,2*\yspace) {\textcolor{black}{$\{1,4\}$}};
			\node[term42](23) at (1*\xspace,2*\yspace) {\textcolor{black}{$\{2,3\}$}};
			\node[term3](24) at (2.5*\xspace,2*\yspace) {\textcolor{black}{$\{2,4\}$}};	
			\node[term2](34) at (4*\xspace,2*\yspace) {\textcolor{black}{$\{3,4\}$}};	
			\node[term4](1) at (-2.5*\xspace,3*\yspace) {\textcolor{black}{$\{1\}$}};
			\node[term4](2) at (-1*\xspace,3*\yspace) {\textcolor{black}{$\{2\}$}};
			\node[term4](3) at (1*\xspace,3*\yspace) {\textcolor{black}{$\{3\}$}};
			\node[term4](4) at (2.5*\xspace,3*\yspace) {\textcolor{black}{$\{4\}$}};
			\node[term4](0) at (0,4*\yspace) {\textcolor{black}{$\{\emptyset\}$}};
			
			\draw[link] (1234) edge (123);
			\draw[link] (1234) edge (124);
			\draw[link] (1234) edge (134);	
			\draw[link] (1234) edge (234);	
			\draw[link] (123) edge (12);
			\draw[link] (123) edge (13);
			\draw[link] (123) edge (23);
			\draw[link] (124) edge (12);
			\draw[link] (124) edge (14);
			\draw[link] (124) edge (24);	
			\draw[link] (134) edge (13);
			\draw[link] (134) edge (14);
			\draw[link] (134) edge (34);
			\draw[link] (234) edge (23);
			\draw[link] (234) edge (34);
			\draw[link] (234) edge (24);	
			\draw[link] (1) edge (12);
			\draw[link] (1) edge (13);
			\draw[link] (1) edge (14);
			\draw[link] (2) edge (12);
			\draw[link] (2) edge (23);
			\draw[link] (2) edge (24);
			\draw[link] (3) edge (13);
			\draw[link] (3) edge (23);
			\draw[link] (3) edge (34);
			\draw[link] (4) edge (14);
			\draw[link] (4) edge (24);
			\draw[link] (4) edge (34);
			\draw[link] (0) edge (1);
			\draw[link] (0) edge (2);
			\draw[link] (0) edge (3);
			\draw[link] (0) edge (4);
	\end{tikzpicture}}
\end{minipage}
\\\hline
\end{tabular}
	\caption{Illustration of the rules to compute parents (left) and children (right), using the set representation on the Hasse diagram of the poset $(2^{\{1,2,3,4\}},\subseteq)$.
 Orange vertices depict the elements of the starting sets ($S_1$ and $S_2$); green vertices depict independent sets, among which oval ones are maximal; violet vertices depict dominated sets, among which oval ones are maximal and not included in any independent set.
 \textbf{Parents example:} $S_1$ is composed of two sets, in orange, which dominate the violet sets and are dominated by the gray sets, leaving a single $\sigma=\{1,2,4\}$ as a maximal set independent of $S_1$. Thus, by Rule 1, $S_1 \cup \{\sigma\}$ generates a new parent function.
 Additionally, $\sigma_1 = \{1,3\}$ and $\sigma_2 = \{2,3\}$ are maximal sets dominated by $S_1$, are not contained in any maximal set independent of $S_1$, are both contained in $s = \{1,2,3\}$, and neither $\{3,4\} \cup \sigma_1$ nor $\{3,4\} \cup \sigma_2$ are a cover, thus satisfying all conditions to generate a new parent function $(S_1 \setminus \{s\}) \cup \{\sigma_1, \sigma_2\}$, by Rule 3.
 \textbf{Children example:} By Rule 1, $S_2 \setminus \{s\}$, with $s = \{1,2,3\}$, is a valid child of $S_2$ since it is a cover and $s$ is a maximal set independent of $S_2 \setminus \{s\}$. The same is true for $s = \{1,2,4\}$. Finally, by taking out $s=\{3,4\}$, by Rule 2, we must add both $\sigma_1=\{1,3,4\}$ and $\sigma_2=\{2,3,4\}$, which extend $s$ by one element yielding to an independent set of $S_2 \setminus \{s\}$.  \label{fig:rules-example}}
	\end{center}
\end{figure*}


\subsection{Cardinality of the True set of an immediate parent}
\label{sec:properties-parent}

Recall that, by definition, $f \preccurlyeq f^\prime$ if and only if $\TT(f) \subseteq \TT(f^\prime)$. This section quantifies $|\TT(f^\prime) \setminus \TT(f)|$, where $f^\prime$ is an immediate parent of $f$. Each set appearing in $S(f)$ (set representation of $f$), corresponds to a prime implicant of that function. Therefore, by assigning the literals appearing in that set to True, we can enquire about the cardinality of the True set of an element of $\FF_p$ directly from its set representation. More precisely, $\mathbf{x} \in \TT(f)$ if and only if $\exists s \in S(f): \forall i \in s, x_i=1$ (recall that we consider positive functions). In other words, $\mathbf{x}$ belongs to the True set of $f$ if and only if there is at least one prime implicant in the set representation of $f$ that testifies this fact.

When it is clear from context, we may use the set representation of a function or clause in place of the function or clause itself.
\medskip

\begin{Proposition}\label{prop:rule1-parent}If $S'$ is an immediate parent of $S$ generated by Rule 1, then $|\mathbb{T}(S'\setminus S)| = 1$.
\end{Proposition}
\begin{proof}
Let $S'=S\cup \{\sigma\}$ be an immediate parent of $S$ derived from Rule 1 ($\sigma$ is a maximal set independent of $S$). Because $S\subseteq S'$, and because $\sigma$ is independent of $S$, we have that $|\TT(S)| < |\TT(S')|$. Furthermore, as $\sigma$ is maximal, we have that $\forall k\not\in \sigma,\,\exists s\in S$ such that $s\subseteq \sigma\cup\{k\}$.
Therefore $\TT(\sigma\cup\{k\})\subseteq \TT(\{s\})$, which means that every state $\mathbf{x}$ satisfying $\sigma\cup \{k\}$ (\textit{i.e.}, evaluating $\sigma\cup \{k\}$ to True) also satisfies $s$.
Hence, the only $\mathbf{x}\in\TT(S')$ such that $\mathbf{x}\not\in \TT(S)$ verifies $\forall i\in\sigma, x_i=1, \forall i\not\in \sigma, x_i=0$. This proves that $|\TT(S'\setminus S)|=1$.
\end{proof}

\begin{Proposition}\label{prop:rule2-parent}If $S'$ is an immediate parent of $S$ generated by Rule 2, then $|\mathbb{T}(S'\setminus S)| = 1$.
\end{Proposition}
\begin{proof}
Let $S'=S\setminus\{s_1,\ldots,s_k\}\cup \{\sigma\}$ be an immediate parent of $S$ derived from Rule 2, where $\sigma$ is a maximal set dominated by $S$, and not contained in any maximal set independent of $S$.
For all $i\in\{1,\ldots, k\}$, we have that $|\sigma|=|s_i|-1$, because $\sigma$ is a maximal set dominated by $\{s_1,\ldots, s_k\}$.
Without loss of generality, let $\sigma=\{1,\ldots, q\}$, and consider $\sigma'=\sigma\cup\{j\}$ with $j\in\{q+1,\ldots, p\}$.
If $\exists i\in \{1,\ldots, k\}$ such that  $\sigma' = s_i$, then $\TT(\sigma')\subset \TT(S)$.
Otherwise, if $\exists j\in\{q+1,\ldots, p\}$ such that $\sigma'= (\sigma\cup \{j\}) \not\in \{s_1,\ldots, s_k\}$ then: either (a) $\exists s\in S\setminus \{s_1,\ldots, s_k\}$ such that $s \subsetneq \sigma'$ and thus $\TT(\sigma')\subset \TT(S)$ or, (b) $\forall s\in S, \sigma'\not\subset s$, which contradicts the fact that $\sigma$ is a maximal set independent of $S$. Hence, the True set of any set defined as $\sigma$ augmented by one element of $\{q+1,\ldots, p\}$ is included in $\TT(S)$, which means that $\sigma=\{1,\ldots, q\}$  adds a single True state to $\TT(S)$, which is $\mathbf{x}\in \BB^p, \forall i\in \{1,\ldots, q\}, x_i=1$, and $\forall i\in\{q+1,\ldots, p\}, x_i=0$, in other words $|\TT(S'\setminus S)|=1$.
\end{proof}

\begin{Proposition}\label{prop:rule3-parent}If $S'$ is an immediate parent of $S$ generated by Rule 3, then $|\mathbb{T}(S'\setminus S)| = 2$.
\end{Proposition}

\begin{proof}
Let $S'=S\setminus\{s\}\cup \{\sigma_1,\sigma_2\}$ be an immediate parent of $S$ derived from Rule 3, where $\sigma_1$ and $\sigma_2$ are distinct maximal sets dominated by $S$, not contained in any maximal set independent of $S$, and contained in $s$.
Without loss of generality, let $\sigma_1=\{1,\ldots, q,q+1\}$, $\sigma_2=\{1,\ldots, q,q+2\}$, and $s=\sigma_1\cup \sigma_2=\{1,\ldots, q,q+1,q+2\}$.
Similar to the previous proofs, we will show that a single state is added by $\sigma_1$ (respectively by $\sigma_2$) to the $\mathbb{T}(S)$.

Let us consider $\sigma' =\sigma_1\cup\{i\}$ with $i\in\{q+2,\ldots, p\}$. There must exist a set $s'\in S$ such that  $s'\subseteq \sigma' $. Otherwise, there would be a maximal set independent of $S$ containing $\sigma_1$, a contradiction with Rule 3 item (b). Hence, $\TT(\sigma')\subseteq \TT(S)$, 
that is the True set of any set defined as  $\sigma_1$ augmented by one element of $\{q+2,\ldots,p\}$ is included in $\TT(s)$.

A similar argument can be made for $\sigma_2$. Since both $\sigma_1$ and $\sigma_2$  differ from $s$ by a single distinct element, this proves that $|\TT(S'\setminus S)|+2$, where the two added states in $\TT(S')$ are: (1) $\mathbf{x}\in\BB^p$ such that $ \forall i\in\{1,\ldots, q+1\}, x_i=1$ and $\forall i\in\{q+2,\ldots, p\}, x_i=0$, and (2) $\mathbf{x'}\in\BB^p$ such that $\forall i\in\{1,\ldots, q, q+2\}, x'_i=1$  and $\forall i\in\{q+1, q+3,\ldots,p\}, x'_i=~0$.
\end{proof}

\section{Immediate children of an element of $(\mathcal{S}_p,\preccurlyeq)$\label{sec:children}}

\subsection{Rules to compute immediate children}
\label{sec:rules-child}

In turn, given an element $S$ of the {poset} $(\mathcal{S}_p,\preceq)$, a child $S^\prime$ of $S$ is obtained by applying one of the following rules (illustrated in Figure~\ref{fig:rules-example} right).
Figure \ref{fig:HD-rules} Appendix \ref{sec:app:examples} provides a further illustration of the rules to compute parents and children.

\noindent\fbox{%
	\begin{minipage}{\textwidth}
		\begin{center}{\sc Rules to compute children}\end{center}	
		
		\begin{enumerate}[start=1,label={\hspace*{0.0cm}\sc Rule \arabic*:},wide =0pt,leftmargin=0.2cm, parsep=3pt,topsep=0pt]
			\item $S^\prime = S \setminus \{s\}$ where:
      \begin{enumerate}
        \item $S \setminus \{s\}$ yields a cover of $\{1,\dots, p\}$;
        \item $s$ is a maximal set independent of $S\setminus\{s\}$.
      \end{enumerate}
			
			\item $S^\prime = (S \setminus \{s\}) \bigcup_{i=1\ldots k} \{ \sigma_i \}$ with $k \geq 1$ where:
			\begin{enumerate}[parsep=3pt,topsep=0pt]
        \item $\forall i=1,\ldots, k, \sigma_i = (s \cup \{l_i\})$ with $l_i \in \{1,\ldots, p\}$ and is independent of $S \setminus \{s\}$;
        \item All $\sigma_i$ complying with $(a)$ are present in $S'$.
			\end{enumerate}

			\item $S' = (S \setminus \{s_i, s_j\}) \cup \{ \sigma \}$ where:
			\begin{enumerate}[parsep=3pt,topsep=0pt]
        \item $\sigma=s_i \cup s_j$;
        \item $s_i$ is a maximal set independent of $S\setminus\{s_i\}$;
        \item $s_j$ is a maximal set independent of $S\setminus\{s_j\}$;
        \item $|s_i \setminus s_j| = 1 = |s_j \setminus s_i| $;
  			\item Neither $S\setminus\{s_i\}$ nor $S\setminus\{s_j\}$ yields a cover of $\{1,\ldots, p\}$.
			\end{enumerate}
		\end{enumerate}
\end{minipage}}

\begin{Lemma}\label{lemma:duality}
The set $S^\prime$ defined from $S\in\mathcal{S}_p$ by Rule X to compute children above (X=1,2,3) is such that $S$ is an immediate parent of $S^\prime$ as specified by the corresponding Rule X to compute parents.
\end{Lemma}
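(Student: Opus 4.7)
The plan is to handle the three cases $X\in\{1,2,3\}$ separately: starting from the set $S'$ produced from $S$ by Child Rule X, I would rewrite $S$ in terms of $S'$ and verify that this $S$ satisfies the hypotheses of Parent Rule X applied to $S'$, after which Theorem~\ref{theorem:parents} guarantees the claim. The case $X=1$ is essentially tautological: Child Rule 1 gives $S = S'\cup\{s\}$ with $s$ a maximal set independent of $S\setminus\{s\}=S'$, which is exactly the hypothesis of Parent Rule 1, and the cover property is inherited from $S$.

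For $X=2$, I would write $S=(S'\setminus\{\sigma_1,\ldots,\sigma_k\})\cup\{s\}$ with $\sigma_i = s\cup\{l_i\}$, and check the four conditions of Parent Rule 2 with $\sigma:=s$ against $S'$. Domination and the inclusions $\sigma\subsetneq\sigma_i$ are immediate; maximality of $s$ among sets dominated by $S'$ follows from $|\sigma_i|=|s|+1$ together with the antichain property of $S$ (a strict superset of $s$ cannot fit strictly inside any $\sigma_i$, nor can it be contained in any $\tau\in S\setminus\{s\}\subseteq S'$ without violating the antichain), while the cover condition is inherited. The main obstacle, as I expect it, is showing that $s$ is not contained in any maximal set independent of $S'$. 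I would argue by contradiction: assume $\sigma^*\supsetneq s$ is such a maximal independent set, pick any $l\in\sigma^*\setminus s$, and show that $s\cup\{l\}$ is independent of $S\setminus\{s\}$ by combining the incomparability of $\sigma^*$ with every element of $S'$ with the antichain property of $S$. Condition (b) of Child Rule 2 would then force $s\cup\{l\}$ to coincide with some $\sigma_i$, placing $\sigma_i\subseteq\sigma^*$ and contradicting the independence of $\sigma^*$ from $S'$.

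The case $X=3$ is structurally analogous, with $s_i$ and $s_j$ playing symmetric roles in the reconstruction $S=(S'\setminus\{\sigma\})\cup\{s_i,s_j\}$ where $\sigma=s_i\cup s_j$. The key numerical fact, extracted from condition (d) of Child Rule 3, is $|s_i|=|s_j|$ and $|\sigma|=|s_i|+1$, which drives the maximal-dominated property of both $s_i$ and $s_j$ with respect to $S'$. Condition (c) of Parent Rule 3 is trivial, and its condition (d) is condition (e) of Child Rule 3 rewritten through the identities $(S'\setminus\{\sigma\})\cup\{s_i\}=S\setminus\{s_j\}$ and the symmetric one. For the non-containment in maximal independent sets, the argument mirrors that of $X=2$: a hypothetical $\sigma^*\supsetneq s_i$ must exclude the unique element $b\in s_j\setminus s_i$ (otherwise $\sigma\subseteq\sigma^*$ violates independence), so any $l\in\sigma^*\setminus s_i$ satisfies $l\neq b$ and makes $s_i\cup\{l\}$ independent of $S\setminus\{s_i\}$, contradicting the maximality of $s_i$ asserted in Child Rule 3 condition (b); the argument for $s_j$ is symmetric.
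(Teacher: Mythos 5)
Your proposal is correct and follows essentially the same route as the paper's proof: for each rule you invert the child construction to recover $S$ from $S^\prime$ and verify the hypotheses of the corresponding parent rule, then invoke Theorem~\ref{theorem:parents}. Your by-contradiction arguments for the ``not contained in any maximal set independent of $S^\prime$'' conditions are more explicit than the paper's one-line justifications, but they establish the same facts.
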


\begin{proof}
It is clear that each proposed form of children leads to a set $S^\prime$ that is an element of $\mathcal{S}_p$ because it originates from $S$ by removing one or two elements and possibly adding others that are non-comparable with the remaining elements. Furthermore, by checking the removed and added sets involved in constructing $S^\prime$ it is easy to see that $S^\prime$ is a cover of $\{1,\ldots,p\}$ and that $S' \preccurlyeq S$.
  
Let $S^\prime = S \setminus \{s\}$ be a set defined by Rule 1 above. It follows that $S=S^\prime\cup \{s\}$ is an immediate parent of $S^\prime$ as specified by Rule 1 to compute immediate parents as $s$ is a maximal set independent of $S^\prime$ by hypothesis.

Let $S^\prime
=(S \setminus \{s\}) \cup\{\sigma_1,\ldots \sigma_k \}$ with $k \geq 1$ be a set as defined by Rule 2 above. It follows that $S=S^\prime \setminus \{\sigma_1,\ldots, \sigma_k\}\cup\{s\}$ is an immediate parent of $S'$ defined by Rule 2 to compute immediate parents. In fact, since $\forall i=1,\ldots, k, \sigma_i=s \cup \{l_i\}$ is independent of $S \setminus \{s\}$, with $l_i \in \{1,\ldots, p\}$, $s$ is by construction contained in all $\sigma_i$, and is a maximal set dominated by $S^\prime$. Finally, it is not contained in any maximal set independent of $S^\prime$ because all possible extensions of $s$ are comparable to some $s\cup\{l_i\} = \sigma_i$ already present in $S^\prime$.

Lastly, let $S^\prime=(S\setminus \{s_i,s_j\})\cup\{\sigma\}$ be a child of $S$ defined by Rule 3. Then $S=S^\prime\setminus\{\sigma\}\cup\{s_i,s_j\}$ is an immediate parent of $S^\prime$ defined by Rule 3 to immediate compute parents. 
In fact, $s_i$ and $s_j$ are maximal sets dominated by $S^\prime$ because they are maximally dominated by $\sigma$ and no other set in $S^\prime$ contains them. They are not contained in any maximal set independent of $S^\prime$ because by hypothesis $s_i$ is already a maximal set independent of $S\setminus\{s_i\}$ and $s_j$ is already a maximal independent set of $S\setminus\{s_j\}$. Furthermore, $s_i$ and $s_j$ are contained in $\sigma$ by construction and as $S\setminus\{s_i\}$ nor $S\setminus\{s_j\}$ yield covers of $\{1,\dots,p\}$ by hypothesis, neither $(S^\prime \setminus \{\sigma\}) \cup \{s_1\}$ nor $(S^\prime \setminus \{\sigma\}) \cup \{s_2\}$ yield covers of $\{1,\dots,p\}$, proving our statement that $S=S^\prime\setminus\{\sigma\}\cup\{s_i,s_j\}$ is an immediate parent of $S^\prime$ defined by Rule 3 to compute immediate parents.
\end{proof}

\begin{Theorem}\label{theorem:children}
$S'$ is an immediate child of $S$ in $\mathcal{S}_p$ if and only if $S'$ is generated by one of the 3 rules to compute children presented above.
\end{Theorem}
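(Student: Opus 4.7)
The plan is to combine Lemma~\ref{lemma:duality} with Theorem~\ref{theorem:parents}, so that very little genuinely new argument is required. The two implications are handled separately, with one small ``completeness'' step as the only real obstacle.

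For the ``if'' direction, suppose $S'$ is generated from $S$ by child Rule~X for some $X \in \{1,2,3\}$. Lemma~\ref{lemma:duality} then tells us that $S$ is an immediate parent of $S'$ produced by the corresponding parent Rule~X. By definition, this means $S' \preccurlyeq S$ and no $S'' \in \mathcal{S}_p$ fits strictly between them, which is exactly the condition for $S'$ to be an immediate child of $S$.

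For the ``only if'' direction, let $S'$ be an immediate child of $S$. Then $S$ is an immediate parent of $S'$, and Theorem~\ref{theorem:parents} forces $S$ to arise from $S'$ via exactly one of the three parent rules. I would then invert each parent rule and check that the resulting presentation of $S'$ matches the corresponding child rule. The size constraints extracted from the maximal-dominated-set arguments used in Propositions~\ref{prop:rule2-parent}--\ref{prop:rule3-parent} give $|s_i| = |\sigma|+1$ for inversions of Rule~2, and $\sigma_1 \cup \sigma_2 = s$ with $|\sigma_1 \setminus \sigma_2| = |\sigma_2 \setminus \sigma_1| = 1$ for inversions of Rule~3. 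The antichain property of $S$ and $S'$ supplies all the required independence conditions, and the non-cover conditions of parent Rule~3 translate verbatim to condition~(e) of child Rule~3.

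The only substantive obstacle is condition~(b) of child Rule~2, which demands that \emph{every} set of the form $\sigma \cup \{l\}$ that is independent of $S \setminus \{\sigma\}$ already appears in $S'$. This is where I would invoke immediacy directly: if some such $\sigma^\star = \sigma \cup \{l^\star\}$ were missing from $S'$, then $\sigma^\star$ would have the same cardinality as each $s_i = \sigma \cup \{l_i\}$ and be distinct from all of them, hence incomparable to each $s_i$; combined with its assumed independence from $S \setminus \{\sigma\}$, this makes $\sigma^\star$ incomparable to all of $S'$. Then $S' \cup \{\sigma^\star\}$ would be an antichain cover, and since $\sigma \subsetneq \sigma^\star$ the set $\sigma$ still witnesses $\sigma^\star$ in $S$, so $S' \prec S' \cup \{\sigma^\star\} \prec S$, contradicting the immediacy of the child. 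A similar argument, relying essentially on parent Rule~3's assumption that neither $\sigma_j$ sits inside any maximal set independent of $S'$, handles the maximality requirements (b) and (c) in the Rule~3 inversion.
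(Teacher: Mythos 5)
Your proof is correct and follows essentially the same route as the paper: Lemma~\ref{lemma:duality} supplies the ``if'' direction, and the ``only if'' direction comes from observing that an immediate child relation makes $S$ an immediate parent of $S'$ and then invoking Theorem~\ref{theorem:parents}. Your version of the ``only if'' direction is in fact more explicit than the paper's --- the paper merely asserts that Theorem~\ref{theorem:parents} leaves no other forms of children, whereas you actually invert each parent rule and, in particular, verify condition~(b) of child Rule~2 via the intermediate antichain $S' \cup \{\sigma^\star\}$, which is a worthwhile addition.
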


\begin{proof}
    The proof follows from Lemma \ref{lemma:duality}, which states that if set $S'$ is defined by one of the rules to compute children from a given set $S$, then $S'$ is a child of $S$, and $S$ is an immediate parent of $S'$.  By Theorem \ref{theorem:parents} there are no other possible forms of children and $\nexists S^{\prime\prime}\in \mathcal{S}_p$ such that $S^\prime\preccurlyeq S^{\prime\prime} \preccurlyeq S$, therefore making $S'$ an immediate child of $S$ in $\mathcal{S}_p$.
\end{proof}


\subsection{Cardinality of the True set of an immediate child}
\label{sec:properties-child}

Section~\ref{sec:properties-parent} shows that the cardinality difference between the True set of an immediate parent of a function $S$ and the True set of $S$ is either one or two (depending on the rules used to generate the immediate parent). This section relies on this result and Theorems \ref{theorem:parents} and \ref{theorem:children} to establish the converse result for immediate children.

\begin{Proposition}\label{prop:rule1-child}If $S'$ is an immediate child of $S$ generated by Rule 1, then $|\mathbb{T}(S')| = |\mathbb{T}(S)|-1$.
\end{Proposition}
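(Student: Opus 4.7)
The plan is to exploit the duality already established between the rules for parents and the rules for children, thereby reducing this proposition to Proposition \ref{prop:rule1-parent}. Specifically, by Lemma \ref{lemma:duality}, if $S'$ is an immediate child of $S$ generated by Rule 1 (for children), then $S$ is an immediate parent of $S'$ generated by the corresponding Rule 1 (for parents). I can then invoke Proposition \ref{prop:rule1-parent} applied to the pair $(S',S)$ (reading $S$ as the immediate parent of $S'$), which yields $|\TT(S)|=|\TT(S')|+1$, and hence $|\TT(S')|=|\TT(S)|-1$.

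Concretely, I would write: let $S'=S\setminus\{s\}$ be an immediate child of $S$ obtained by Rule 1, where $s$ is a maximal set independent of $S\setminus\{s\}=S'$, and $S'$ is a cover of $\{1,\dots,p\}$. Then $S=S'\cup\{s\}$ has exactly the form prescribed by Rule 1 for computing immediate parents of $S'$ (with the independent set being $s$). Lemma \ref{lemma:duality} confirms that $S$ is indeed an immediate parent of $S'$ in $\mathcal{S}_p$, generated by Rule 1 for parents. Proposition \ref{prop:rule1-parent} therefore applies and gives $|\TT(S)|=|\TT(S')|+1$, from which the claim follows immediately by rearrangement.

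I do not anticipate any real obstacle here: all the combinatorial work on how the addition of a maximal independent set contributes exactly one new satisfying assignment has already been carried out in the proof of Proposition \ref{prop:rule1-parent}. The only thing to verify is that the hypotheses of the child rule match exactly those of the parent rule after the roles of $S$ and $S'$ are swapped, which is precisely the content of Lemma \ref{lemma:duality}. Thus the proof is essentially a one-line appeal to duality, and the analogous arguments for Rules 2 and 3 (treated in the subsequent propositions) will follow the same pattern, with cardinality changes of $1$ and $2$ respectively as dictated by Propositions \ref{prop:rule2-parent} and \ref{prop:rule3-parent}.
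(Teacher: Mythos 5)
Your proposal is correct and follows exactly the paper's own argument: invoke Lemma \ref{lemma:duality} to see that $S$ is an immediate parent of $S'$ generated by Rule 1 for parents, then apply Proposition \ref{prop:rule1-parent} to get $|\TT(S)|=|\TT(S')|+1$ and rearrange. No gaps; the verification that the child-rule hypotheses match the parent-rule hypotheses is indeed already the content of Lemma \ref{lemma:duality}.
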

\begin{proof}
    The proof follows immediately from Lemma \ref{lemma:duality} and Proposition \ref{prop:rule1-parent} because $S$ is an immediate parent of $S^\prime$ generated by Rule 1 to compute immediate parents, therefore $|\TT(S)|=|\TT(S^\prime)|+1$.
\end{proof}

\begin{Proposition}\label{prop:rule2-child}If $S'$ is an immediate child of $S$ generated by Rule 2, then $|\mathbb{T}(S')| = |\mathbb{T}(S)|-1$.
\end{Proposition}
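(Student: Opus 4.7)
The plan is to mirror the proof of Proposition \ref{prop:rule1-child}, exploiting the duality between rules for computing children and rules for computing parents that was already established in Lemma \ref{lemma:duality}. The point of that lemma is precisely to let us transfer quantitative statements about parent constructions to the corresponding child constructions: once we know $S$ is an immediate parent of $S'$ under a given rule, we know the cardinality gap from the parent side.

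First, I would invoke Lemma \ref{lemma:duality} applied to Rule 2: if $S' = (S \setminus \{s\}) \cup \{\sigma_1,\ldots,\sigma_k\}$ is an immediate child of $S$ generated by Rule 2 (for children), then $S = (S' \setminus \{\sigma_1,\ldots,\sigma_k\}) \cup \{s\}$ is an immediate parent of $S'$ produced by Rule 2 (for parents), with $s$ playing the role of the maximal set dominated by $S'$ and the $\sigma_i$ playing the roles of the sets $s_1,\ldots,s_k$ of that rule.

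Second, I would apply Proposition \ref{prop:rule2-parent} directly to the pair $(S',S)$: since $S$ is an immediate parent of $S'$ generated by Rule 2 (for parents), we have $|\TT(S)| = |\TT(S')| + 1$, and rearranging yields $|\TT(S')| = |\TT(S)| - 1$, which is the claim.

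I do not expect any real obstacle here: all of the combinatorial work has already been absorbed into Lemma \ref{lemma:duality} and Proposition \ref{prop:rule2-parent}. The only thing worth double-checking is that the correspondence in Lemma \ref{lemma:duality} matches the labels of Rule 2 on both sides (child vs.\ parent) exactly, so that the hypotheses of Proposition \ref{prop:rule2-parent} are literally satisfied by the constructed $S$; the lemma was proved to guarantee precisely this, so the deduction is a one-line consequence.
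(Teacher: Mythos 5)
Your proposal is correct and coincides with the paper's own argument: the paper likewise invokes Lemma \ref{lemma:duality} to recognise $S$ as an immediate parent of $S'$ under Rule 2 for parents, then applies Proposition \ref{prop:rule2-parent} to get $|\mathbb{T}(S)|=|\mathbb{T}(S')|+1$ and rearranges. No differences worth noting.
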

\begin{proof}
    The proof follows immediately from Lemma \ref{lemma:duality} and Proposition \ref{prop:rule2-parent} because  $S$ is an immediate parent of $S^\prime$ generated by Rule 2 to compute immediate parents, therefore $|\TT(S)|=|\TT(S^\prime)|+1$.
\end{proof}

\begin{Proposition}\label{prop:rule3-child}If $S'$ is an immediate child of $S$ generated by Rule 3, then $|\mathbb{T}(S')| = |\mathbb{T}(S)|-2$.
\end{Proposition}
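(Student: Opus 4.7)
The plan is to mirror exactly the strategy used in Propositions \ref{prop:rule1-child} and \ref{prop:rule2-child}, namely to lift the result from the parent side by duality. Concretely, I would invoke Lemma \ref{lemma:duality}: if $S'$ is an immediate child of $S$ generated by Rule 3 to compute children, then $S$ is an immediate parent of $S'$ generated by Rule 3 to compute parents. Then, directly applying Proposition \ref{prop:rule3-parent} to the pair $(S', S)$ yields $|\TT(S)| = |\TT(S')| + 2$, which rearranges to $|\TT(S')| = |\TT(S)| - 2$, as desired.

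There is essentially no obstacle: the duality lemma was precisely designed to make the three children propositions one-line consequences of their parent counterparts, and the Rule 3 version is no exception. The only thing to double-check is that the hypothesis ``$S'$ is an immediate child of $S$ generated by Rule 3'' genuinely matches the hypothesis of Lemma \ref{lemma:duality} for Rule 3, so that the invocation of Proposition \ref{prop:rule3-parent} is legitimate; this is immediate from the statement of the lemma.

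Thus the proof will be two sentences, of the same shape as the previous two:

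\begin{quote}
\emph{The proof follows immediately from Lemma \ref{lemma:duality} and Proposition \ref{prop:rule3-parent}, because $S$ is an immediate parent of $S'$ generated by Rule 3 to compute immediate parents, and therefore $|\TT(S)| = |\TT(S')| + 2$.}
\end{quote}

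No new combinatorial bookkeeping about witnesses, maximal dominated sets, or independent sets is required here, since all of that work was already carried out in the proof of Proposition \ref{prop:rule3-parent} (identifying the exactly two new True states, one for each of $\sigma_1$ and $\sigma_2$).
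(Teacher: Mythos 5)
Your proposal is correct and is essentially identical to the paper's own proof, which likewise invokes Lemma \ref{lemma:duality} to conclude that $S$ is an immediate parent of $S'$ generated by Rule 3 to compute parents, and then applies Proposition \ref{prop:rule3-parent} to obtain $|\TT(S)|=|\TT(S')|+2$. No further comment is needed.
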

\begin{proof}
    The proof follows immediately from Lemma \ref{lemma:duality} and Proposition \ref{prop:rule3-parent} because  $S$ is an immediate parent of $S^\prime$ generated by Rule 3 to compute immediate parents, therefore $|\TT(S)|=|\TT(S^\prime)|+2$.
\end{proof}

\section{\label{sec:implementation}Implementation}

Following the rules presented in the previous sections, we developed two algorithms to compute the set of immediate parents and immediate children.
These algorithms receive as input a Boolean function $f$ in its set-representation $S \in \mathcal{S}_p$ and are made available as a dedicated Python library in \url{https://github.com/ptgm/pyfunctionhood}, under the GNU General Public License v3.0 (GPL-3.0).

Algorithm~\ref{alg:parents} shows the pseudo-code to compute the immediate parents (a step-by-step description of both algorithms is available in Appendix~\ref{sec:app:algorithms} as supplementary material). We start by computing the set $\mathcal{C}$ of maximal sets independent of $S$ (line~\ref{line:p:indep}), which in the worst case is $O(2^p)$ time complexity, dominating the overall running time. However, this is much more effective than generating and filtering the whole Hasse diagram which is $O(2^{2^p})$. Next, we iterate over the maximal sets dominated by $S$, whose number is also combinatorial. Hence, computing a Boolean function's set of immediate parents is $O(2^p)$.

{\footnotesize \begin{algorithm}[h!]
    \caption{Algorithm to Compute Immediate Parents}
    \label{alg:parents}
    \footnotesize
    \textbf{Input:} An element $S \in \mathcal{S}_p$\\
    \textbf{Output:} A set $\mathcal{P} \subseteq \mathcal{S}_p$ with all immediate parents of $S$ in $\mathcal{S}_p$\\
    \begin{algorithmic}[1] 
        \STATE {\em getMaxInd($S$)} : set of all maximal sets independent of $S$ 
        \STATE {\em getMaxDom($S$)} : set of all maximal sets dominated by $S$
        \STATE {\em getContaining($d,S$)} : set of all sets in $S$ containing $d$
        \STATE {\em isCover($S$)} : True if $S$ is a cover of $\{1,\dots, p\}$

        \STATE $\mathcal{P} \longleftarrow \varnothing $
        \STATE $\mathcal{C}\longleftarrow getMaxInd(S)$ \label{line:p:indep}

        \FOR{$c \in \mathcal{C}$} \label{line:p:for_indep}
            \STATE $\mathcal{P} \longleftarrow \mathcal{P} \cup \{ S \cup \{c\}\}$ \label{line:p:for_indep_end} \hfill // Rule 1
        \ENDFOR
        \STATE $\mathcal{D} \longleftarrow getMaxDom(S)$ \label{line:p:maxdom}
        \FOR{$d \in \mathcal{D}$} \label{line:p:maxdom_iter}
            \IF{$|getContaining(d,\mathcal{C})|>0$}
                \STATE $\mathcal{D} \longleftarrow \mathcal{D}\setminus\{d\}$
            \ENDIF
        \ENDFOR \label{line:p:maxdom_iter_end}
        \STATE $\mathcal{D}notused \longleftarrow \emptyset$ \label{line:p:map}

        \FOR{$d \in \mathcal{D}$} \label{line:p:iterD}
            \STATE $Containing \longleftarrow getContaining(d,S)$ 
            \IF{$isCover(S\setminus Containing \cup \{d\})$}
                \STATE $\mathcal{P} \longleftarrow \mathcal{P} \cup \{S\setminus Containing \cup \{d\}\}$ \hfill // Rule 2
            \ELSE
                \FOR{$s \in Containing$}
                    \STATE $\mathcal{D}notused[s] \longleftarrow \mathcal{D}notused[s] \cup \{d\}$ 
                \ENDFOR
            \ENDIF
        \ENDFOR \label{line:p:iterD_end}
        
        \FOR{$s \in \mathcal{D}notused$} \label{line:p:iterNotUsed}
            \FOR{$i=0 \to |\mathcal{D}notused[s]|-2$} \label{line:p:innerLoop}
                \STATE $d_i \longleftarrow \mathcal{D}notused[s][i]$
                \FOR{$j=i+1 \to |\mathcal{D}notused[s]|-1$}
                    \STATE $d_j \longleftarrow \mathcal{D}notused[s][j]$
                    \STATE $\mathcal{P} \longleftarrow \mathcal{P} \cup \{S\setminus \{s\} \cup \{d_i,d_j\}\}$ \hfill // Rule 3
                \ENDFOR
            \ENDFOR \label{line:p:innerLoop_end}
        \ENDFOR \label{line:p:iterNotUsed_end}
        \RETURN $\mathcal{P}$
    \end{algorithmic}
\end{algorithm}
}

To illustrate the complexity of the problem, we computed 100 random walks starting at the infimum (supremum) to the supremum (infimum) functions, considering an increasing number of variables (from 2 to 11). For a given trace, all neighbouring parents (children) of a given function are computed, and one of these parents (children) is randomly and uniformly chosen to be the next function to be part of the trace. This is repeated until the trace reaches the supremum (infimum) function.
At each step, for statistics, we keep the number of parents (children) generated by each of the three Rules, as well as the size and time of each trace.

Figure \ref{fig:results_hist} shows in the first Y-axis the average (over 100 traces) cumulative number of generated parents (or children) per rule. Interestingly, we observe that the number of functions generated by Rule 1 is exponential. 
Also, we observe that the number of functions generated by Rule 3 is not exponential and that proportionally to the other rules, is much smaller for higher dimensions.
We plot on the second Y-axis of Figure \ref{fig:results_hist} the average number of parents (in solid lines) and children (in dashed lines) generated per rule, by dividing the average number of functions per rule over the 100 traces by the average trace size.
We observe that indeed, not only there are more functions generated by Rule 1 due to the exponential increase in the trace size, but also that each function generates more neighbouring parents (or children) with increasing dimensions.
Also, we confirm that the cumulative number of functions generated by Rule 3 is not exponential. We observe that this number, when divided by the trace size is close to 0, due to the exponential increase of the trace size.

\begin{figure}[t]
  \includegraphics[width=\linewidth]{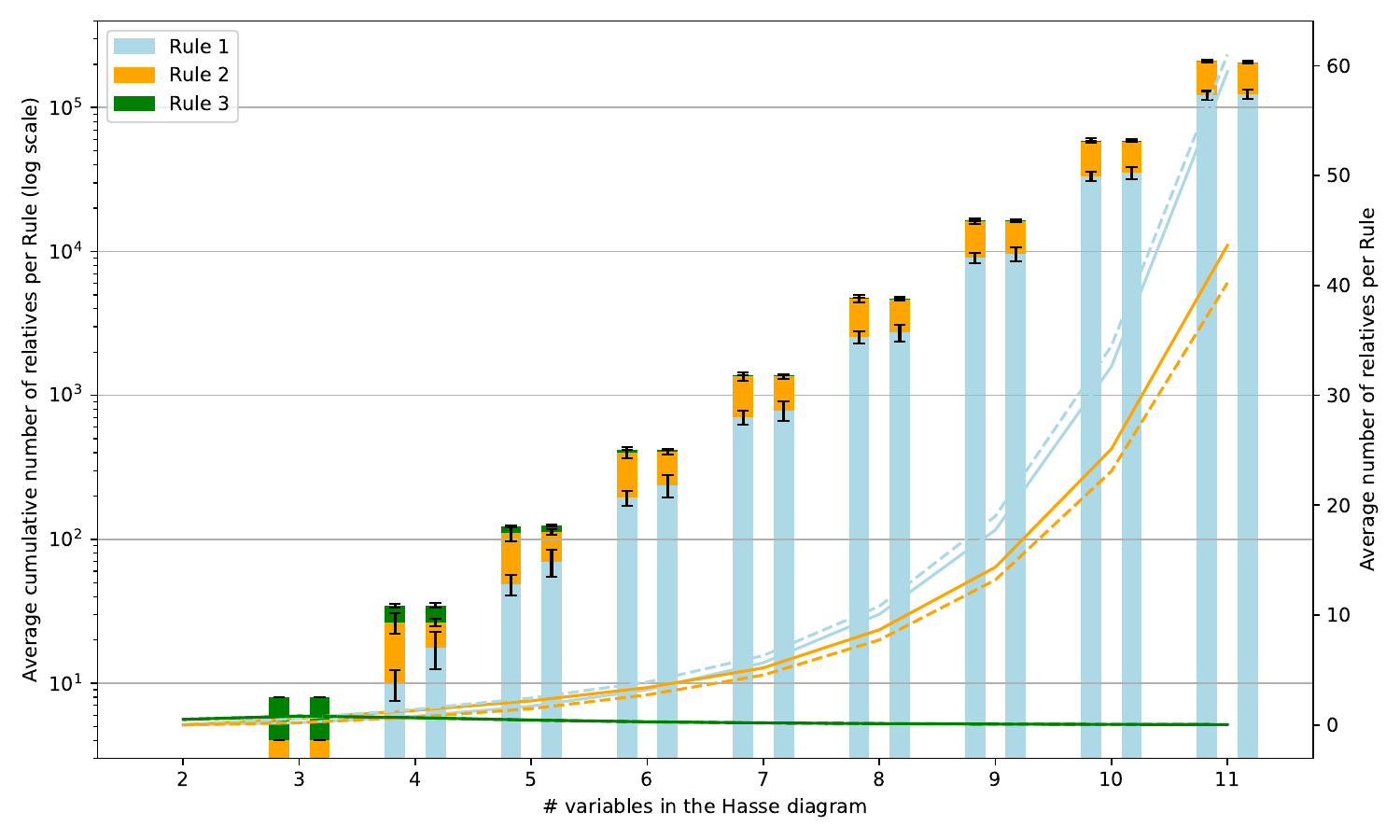}
  \caption{Average over 100 traces - from infimum to supremum, and vice-versa.
  Left Y-axis: cumulative number of generated parents and children per rule, left and right stacked histograms, respectively.
  Right Y-axis: number of generated parents (solid lines) and children (dashed lines) per rule, divided by average trace size.\label{fig:results_hist}}
\end{figure}

When analysing the performance of the proposed algorithms, we observe in the first Y-axis of Figure \ref{fig:results_time} that the time to compute a given trace increases exponentially with increasing dimensions, with the main contributing factor being the exponential increase of the trace size, which is shown in the second Y-axis.
However, the time to compute a given trace is not only dependent on the trace size but also on the number of parents (or children) generated per function (see Figure~\ref{fig:results_hist}), as we can observe that the time increases slightly more than linear in logscale.
When comparing the trace times when generating parents against the trace times when generating children, we can observe that the average time to generate children is slightly higher, suggesting potential improvements.

\begin{figure}[h]
    \includegraphics[width=\linewidth]{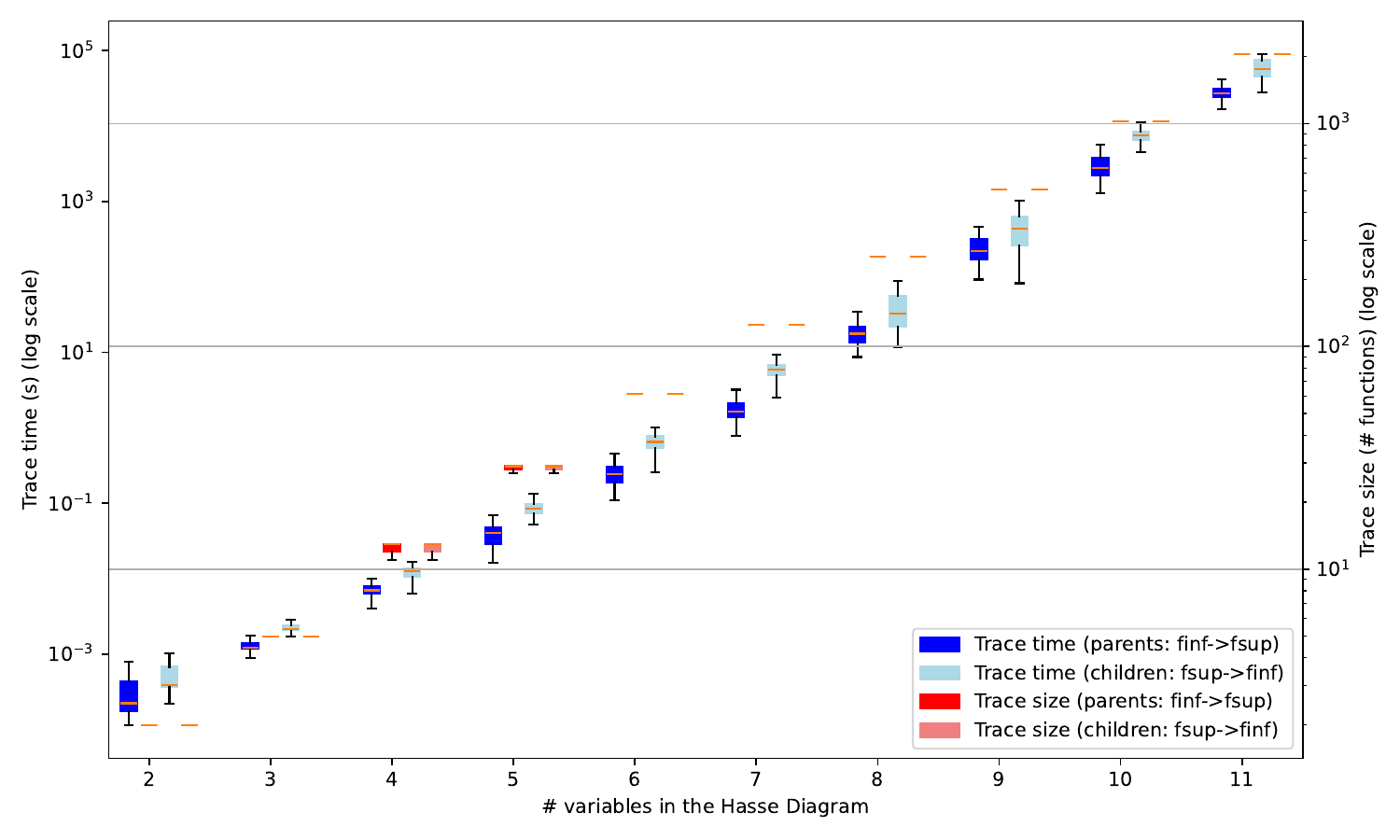}
    \caption{Average over 100 traces - from infimum to supremum and vice-versa.
    Box plot distribution generating parents (left) and generating children (right) along a trace.\label{fig:results_time}}
\end{figure}

\section{\label{sec:conclusion}Conclusion and prospects}

When defining Boolean models of regulatory networks, the choice of regulatory functions that ensures the desired dynamics is inherently hard due to the lack of regulatory data. In this work, we have characterised the set of monotone Boolean functions complying with a fixed topology of a regulatory network. 
In particular, we have specified its cardinality and its structure as a Partial Ordered set (poset).
Exploiting the poset structure, we re-defined the set of rules proposed in Cury {\it et al.} \cite{cury2019} to compute the direct neighbours of any monotone Boolean function.
These rules permit to navigate locally in the function space without having to generate the whole set of functions and subsequently compare them, which would unnecessarily use memory and CPU resources. We have assessed the number of states added or removed in the True set of a function when considering one of its direct neighbours. Such results can inform about the impact on the model dynamics in terms of transitions when the regulatory function of a component is modified. 

A dedicated Python library is freely available, under the GNU General Public License v3.0 (GPL-3.0), implementing both the three rules to compute the immediate parents given a reference monotone non-degenerate Boolean function, as well as the three rules to compute its immediate children.
It can be used in three distinct manners: as a library integrated in other tools, in the command line passing the reference function as an argument, or using a graphical interface developed with Tkinter. A small tutorial is presented in \url{https://github.com/ptgm/pyfunctionhood}.

As prospects, the rules to obtain neighbouring regulatory functions turn out to be useful for Probabilistic Boolean Networks (PBNs) as introduced by Shmulevich {\it et al.} \cite{shmulevich2002}. In contrast with Boolean models in which each component is associated with a unique regulatory function, PBNs introduce uncertainty in the regulatory functions governing the behaviours of model components. This is done by providing, for each component, a set of regulatory functions, each with a given probability.
We propose to associate a reference regulatory function with a certain probability and to distribute the remaining probability to the neighbouring functions (possibly at varying distances). This could be used to study the robustness of a given reference function with respect to the desired observations.

Furthermore, when a model does not meet specific requirements, the knowledge of the direct neighbourhood of regulatory functions could allow to perform local searches to improve model outcomes, with a minimal impact on the regulatory structure.
Additionally, it would allow for the qualification of the set of models complying with certain requirements, such as: models that have the same regulatory network, but different functions; or models capable of satisfying similar dynamical restrictions.
This has a huge impact on tools like ModRev~\cite{gouveia2020}, which proposes minimal repairs that are as close as possible to the original reference functions.

Finally, numerous tools are provided and integrated in the context of the CoLoMoTo (Consortium for Logical Models and Tools) at \url{https://colomoto.github.io} \cite{colomoto2018}.
This new library will be made available as part of these tools, as well as potentially being integrated in existing ones.

\subsubsection*{Funding}
 JC acknowledges the support from the Brazilian agency CAPES, with a one-year research fellowship to visit IGC. This work has been further supported by the Portuguese national agency Fundação para a Ciência e a Tecnologia (FCT) with references PTDC/EEI-CTP/2914/2014, DOI:10.54499/2023.14280.PEX,
DOI:10.54499/2024.07475.IACDC, and 
DOI:10.54499/UIDB/50021/2020.

\bibliographystyle{abbrv}
\bibliography{biblio}

\appendix
\section*{Appendix}

\section{Partially Ordered Sets}\label{app:posets}

Given a set $S$, a \emph{Partial Order} on $S$ is a binary relation $\preceq$ on $S$ that is  reflexive, antisymmetric and transitive.

The pair $(S,\preceq)$ defines a \emph{Partially Ordered Set} (\emph{poset}) in which two  elements $s,s^\prime$ of $S$ are said  \emph{comparable} if either $s \preceq s^\prime$ or $s^\prime \preceq s$.

A finite poset $(S,\preceq)$ can be graphically represented as a \emph{Hasse Diagram} ({HD}), where each element of $S$ is a vertex in the plane, and an edge connects a vertex $s\in S$ to a vertex $s^\prime \in S$ placed above iff: $s \prec s^\prime, \textrm{ and } \nexists s^{\prime\prime} \in S \textrm{ such that } s \prec s^{\prime\prime} \prec s^\prime$. 

Given $A \subseteq S$,  $u \in S$  is an \emph{upper bound} (resp.  \emph{lower bound}) of $A$ in the poset $(S,\preceq)$ if $s\preceq u$ (resp. $l \preceq s$) for all $s \in A$.  A least upper bound (resp. greatest lower bound) of $A$ is called a supremum (resp. an infimum) of $A$. $(S,\preceq)$ is \emph{bounded} if $S$ has both an infimum and a supremum. 

A \emph{chain} in a poset $(S,\preceq)$ is a subset of $S$ in which all the elements are pairwise comparable. The counterpart notion is an \emph{antichain}, defined as a subset of $S$ in which any two elements are incomparable.

Furthermore, an element $s \in S$ is \emph{independent} of an antichain $A\subsetneq S$ if $A \cup \{s\}$ remains an antichain, namely, $s$ is incomparable to any element of $A$. 

Let us now consider the specific case of $2^S$, the set of all subsets of a set $S$. Obviously $(2^S,\subseteq)$ defines a poset.  A set of elements of $2^S$ whose union contains $S$ is called a \emph{cover} of $S$. 

Given $A\subset 2^S$, $s\in (2^S\setminus A)$  \emph{is dominated by} $A$ if $\exists a\in A $ such that $ s\subsetneq a$ ({\it i.e,} $s$ is contained by at least one element of $A$). Furthermore, $s\in 2^S\setminus A$ is a \emph{ maximal set dominated} by $A$ if $\forall s'$ dominated by $A$, $s'\not\subset s$.
Finally, $s\in 2^S\setminus A$ is a \emph{maximal set independent} of $A$ if $s$ is independent of $A$, and $\forall x\in S, x\not\in s, \exists a\in A$ such that $a\subset (s\cup\{x\})$.

Figure \ref{fig:HD1234} illustrates the different notions introduced above on the HD of the poset  $(2^{\{1,2,3,4\}},\subseteq)$.

\begin{figure}[h]
	\begin{center}
\resizebox{0.6\textwidth}{!}{ 
	\begin{tikzpicture}
		\tikzstyle{term} = [draw,shape=rectangle,minimum height=2em,text centered, fill, color=gray!20]
		\tikzstyle{term2} = [draw,shape=rectangle,minimum height=2em,text centered, fill, color=orange]
		\tikzstyle{term3} = [draw,shape=rectangle,minimum height=2em,text centered, fill, color=green!50]
		\tikzstyle{term4} = [draw,shape=rectangle,minimum height=2em,text centered, fill, color=violet!50]
		\tikzstyle{term5} = [draw,shape=rectangle,minimum height=2em,text centered, fill, color=blue!50]
		\tikzstyle{term6} = [draw,shape=ellipse,minimum height=2em,text centered, fill, color=green!50]
		\tikzstyle{term7} = [draw,shape=ellipse,minimum height=2em,text centered, fill, color=violet!50]
		\tikzstyle{link} = [-]
		\def\xspace{2}
		\def\yspace{-2}
		
		\node[term5] (1234) at (0,0){\textcolor{black}{$\{1,2,3,4\}$}};
		\node[term2](123) at (-2.5*\xspace,\yspace) {\textcolor{black}{$\{1,2,3\}$}};
		\node[term5](124) at (-1*\xspace,\yspace) {\textcolor{black}{$\{1,2,4\}$}};
		\node[term6](134) at (1*\xspace,\yspace) {\textcolor{black}{$\{1,3,4\}$}};
		\node[term5](234) at (2.5*\xspace,\yspace) {\textcolor{black}{$\{2,3,4\}$}};
	\node[term7](12) at (-4*\xspace,2*\yspace) {\textcolor{black}{$\{1,2\}$}};	
	\node[term7](13) at (-2.5*\xspace,2*\yspace) {\textcolor{black}{$\{1,3\}$}};
	\node[term3](14) at (-1*\xspace,2*\yspace) {\textcolor{black}{$\{1,4\}$}};
	\node[term7](23) at (1*\xspace,2*\yspace) {\textcolor{black}{$\{2,3\}$}};
	\node[term2](24) at (2.5*\xspace,2*\yspace) {\textcolor{black}{$\{2,4\}$}};	
	\node[term3](34) at (4*\xspace,2*\yspace) {\textcolor{black}{$\{3,4\}$}};	
	\node[term4](1) at (-2.5*\xspace,3*\yspace) {\textcolor{black}{$\{1\}$}};
	\node[term4](2) at (-1*\xspace,3*\yspace) {\textcolor{black}{$\{2\}$}};
	\node[term4](3) at (1*\xspace,3*\yspace) {\textcolor{black}{$\{3\}$}};
	\node[term4](4) at (2.5*\xspace,3*\yspace) {\textcolor{black}{$\{4\}$}};
	\node[term4](0) at (0,4*\yspace) {\textcolor{black}{$\{\varnothing\}$}};
	
\draw[link] (1234) edge (123);
\draw[link] (1234) edge (124);
\draw[link] (1234) edge (134);	
\draw[link] (1234) edge (234);	
\draw[link] (123) edge (12);
\draw[link] (123) edge (13);
\draw[link] (123) edge (23);
\draw[link] (124) edge (12);
\draw[link] (124) edge (14);
\draw[link] (124) edge (24);	
\draw[link] (134) edge (13);
\draw[link] (134) edge (14);
\draw[link] (134) edge (34);
\draw[link] (234) edge (23);
\draw[link] (234) edge (34);
\draw[link] (234) edge (24);	
\draw[link] (1) edge (12);
\draw[link] (1) edge (13);
\draw[link] (1) edge (14);
\draw[link] (2) edge (12);
\draw[link] (2) edge (23);
\draw[link] (2) edge (24);
\draw[link] (3) edge (13);
\draw[link] (3) edge (23);
\draw[link] (3) edge (34);
\draw[link] (4) edge (14);
\draw[link] (4) edge (24);
\draw[link] (4) edge (34);
\draw[link] (0) edge (1);
\draw[link] (0) edge (2);
\draw[link] (0) edge (3);
\draw[link] (0) edge (4);
\end{tikzpicture}}
\end{center}
\caption{Hasse diagram of the poset $(2^{\{1,2,3,4\}},\subseteq)$. It is clearly bounded ($\varnothing$ being its infimum, and $\{1,2,3,4\}$ its supremum). The orange vertices define an anti-chain $A=\{\{1,2,3\},\{2,4\}\}$, which is a cover of $\{1,2,3,4\}$; green vertices indicate the sets independent of $A$ (with $\{1,3,4\}$ being maximal); blue vertices dominate $A$, and violet vertices are dominated by $A$ (with $\{1,2\}, \{1,3\}, \{2,3\}$ being maximal). \label{fig:HD1234}}
\end{figure}

\section{Proof of Theorem \ref{theorem:parents}}\label{app:proof_th1}

\begin{proof}
Notice that the sets $S$ and $S'$ in the three rules satisfy $S\preccurlyeq S'$. Moreover, $S'$ defined in Rules 1 and 2 clearly yields a cover of $\{1, \dots, p\}$. This is also the case in Rule 3 because $\sigma_1$ and $\sigma_2$ are maximal sets dominated by $S$ contained in $s$ thus: $\exists i\in\{1,\ldots, p\}, i\not\in \sigma_2: \{i\}\cup\sigma_1=s$ and $\exists j\in\{1,\ldots, p\}, j\not\in \sigma_1: \{j\}\cup\sigma_2=s$ thus $S'=S\setminus \{s\}\cup\{\sigma_1,\sigma_2\}$ is a cover of $\{1,\ldots,p\}$.

Hence, for the 3 rules, $S'$ is a {\it valid parent} of $S$ in $\mathcal{S}_p$ when $S'$ can be defined, {\em i.e.,} when each rule can be applied.
\medskip

Let $S'=S \cup \{\sigma\}$ be a set generated by Rule 1. To show that it is an \emph{immediate parent} of $S$, let us assume by contradiction that: $\exists S^{\prime\prime} \in \mathcal{S}_p$ such that $S \preceq S^{\prime\prime} \preceq S^\prime$. Then, by definition, 
\begin{align*}
\begin{cases}
  \forall s\in S, \exists s^{\prime\prime}\in S^{\prime\prime} : s^{\prime\prime} \subseteq s, & (1)\\
  \forall s^{\prime\prime}\in S^{\prime\prime}, \exists s^\prime\in S^\prime : s^{\prime}\subseteq s^{\prime\prime}. & (2)
\end{cases}
\end{align*}

One of two things can happen, either $S^{\prime\prime}$ is of the form: (a) $S^{\prime\prime}=S \cup \{s^{\prime\prime}_1, \dots, s^{\prime\prime}_{l} \}$ with $l \geq 1$ and $\forall i=1,\ldots,l$, $s^{\prime\prime}_i$ independent of $S$; or (b) $S \not\subseteq S^{\prime\prime}$. 

The first case (a) implies that $\sigma$ is contained in all the sets $s^{\prime\prime}_i,\,i=1,\dots,l$, a contradiction with the fact that $\sigma$ is a maximal set independent of $S$.
In the second case (b), there must be a set $s \in S$, such that $s \not\in S^{\prime\prime}$. However, $(1)$ implies that there must exist $s^{\prime\prime} \in S^{\prime\prime}$ such that $s^{\prime\prime} \subset s$.
Condition $(2)$ enforces this $s^{\prime\prime}$ to have a witness in $S^\prime$, which must necessarily be $\sigma$.
However this entails $\sigma\subseteq s^{\prime\prime}\subseteq s$, a contradiction with $\sigma$ being a maximal set independent of $S$.
Therefore sets generated by Rule 1 are immediate parents of $S$ in $\mathcal{S}_p$.
\medskip

Proving the immediacy of parents generated by Rule 2 and Rule 3 is less straightforward. Note that by construction, parents generated by Rule 1 can never lie between $S$ and a parent of $S$ generated by Rule 2 or Rule 3.
A stronger statement can be made: if $\exists S^{\prime\prime} \in \mathcal{S}_p$ such that $S \preccurlyeq S^{\prime\prime} \preccurlyeq S^{\prime}$, where $S^{\prime}$ is a parent generated by Rule 2 or Rule 3, then $S\not\subseteq S^{\prime\prime}$. 

To show this, suppose towards a contradiction that $S^\prime$ is a parent generated by Rule 2 such that $S\preccurlyeq S^{\prime\prime} \preccurlyeq S^{\prime}$ where $S^{\prime\prime}$ contains $S$. As $S \subseteq S^{\prime\prime}$, $S^{\prime\prime}$ is of the form $S^{\prime\prime}= S \cup \{s^{\prime\prime}_1, \dots,s^{\prime\prime}_{l}\} $ for some $l \geq 1$. Each set in $S^{\prime\prime}$ needs its witness in $S^\prime$ to ensure $S^{\prime\prime} \preccurlyeq S^\prime$. As $S \subseteq S^{\prime\prime}$ and $S \preccurlyeq S^\prime$, each set $s$ shared between $S$ and $S^{\prime\prime}$ already has its witness in $S^\prime$. As for sets $s^{\prime\prime}_1, \dots,s^{\prime\prime}_{l}$, they can only have $\sigma$ as their witness, otherwise some $s^{\prime\prime}_i$ would not be independent of $S$. This means that $\sigma \subsetneq s^{\prime\prime}_{i}$ for all $i=1,\dots, l$, a contradiction with $\sigma$ not being contained in any maximal set independent of $S$.
The result for parents generated by Rule 3 follows similarly. The only possible witnesses for any set in $S^{\prime\prime}$ not in $S$ are $\sigma_1$ and $\sigma_2$, meaning at least one of them must be contained in some maximal set independent of $S$, a contradiction with condition (b) of Rule 3.


Let us now prove the immediacy of parents generated by Rule 2. Let $S^\prime=(S \setminus \{s_1 , \dots, s_{k}\} ) \cup \{\sigma\}$ be a parent of $S$ generated by Rule 2 and suppose $S^\prime$ is not an immediate parent of $S$: $\exists S^{\prime\prime} \in \mathcal{S}_p$ such that $S \preccurlyeq S^{\prime\prime} \preccurlyeq S^{\prime}$. Conditions $(1)$ and $(2)$ above apply again. By our previous remark, $S\not\subseteq S^{\prime\prime}$ and so $\exists s \in S$ such that $s \not\in S^{\prime\prime}$. By $(1)$, this set $s$ needs a witness in $S^{\prime\prime}$ which we denote by $s^{\prime\prime}_1$ ($s^{\prime\prime}_1 \subsetneq s$). 
In turn, this $s^{\prime\prime}_1$ requires a witness in $S^\prime$, which can only be $\sigma$ because all sets in $S^\prime$ other than $\sigma$ are in $S$. Therefore $\sigma\subseteq s^{\prime\prime}_1 \subsetneq s$. By hypothesis, $\sigma$ is a maximal set dominated by $S$, thus if $\sigma \subsetneq s$, and $|s|=q$, then $|\sigma|=q-1$. This enforces $|s^{\prime\prime}_1|=q-1$ and hence $s^{\prime\prime}_1$ must coincide with $\sigma$ ($s^{\prime\prime}_1=\sigma$, thus $\sigma \in S^{\prime\prime}$). 

As $S^{\prime\prime} \neq S^\prime$, there must be some other set $s^{\prime\prime}_2$ in $S^{\prime\prime}$, such that $s^{\prime\prime}_2 \not\in S^\prime$.

Again by condition $(2)$, $s^{\prime\prime}_2$ requires a witness in $S^\prime$. This witness can only be some set $s$ belonging to both $S$ and $S^\prime$. That is $\exists s \in S, s\in S^\prime : s \subseteq s^{\prime\prime}_{2}$, which results in a contradiction since this same $s\in S$ would no longer have its witness in $S^{\prime\prime}$, failing to meet condition $(1)$. This proves that sets generated by Rule 2 are immediate parents of $S$ in $\mathcal{S}_p$.

\medskip

To prove that Rule 3 generates immediate parents, let $S'=(S \setminus \{s\}) \cup \{\sigma_{1} , \sigma_{2} \}$ be a parent of $S$ generated by Rule 3 and assume towards a contradiction that $\exists S^{\prime\prime} \in \mathcal{S}_p$ such that $S \preccurlyeq S^{\prime\prime} \preccurlyeq S^\prime$. Since $S \not\subseteq S^{\prime\prime}$, we have that $\exists s_i \in S: s_i \not\in S^{\prime\prime}$. Let $s^{\prime\prime}_1$ be a witness for $s_i$ in $S^{\prime\prime}$ ({\it i.e.} $s^{\prime\prime}_1 \subsetneq s_i$). Condition $(2)$ establishes that $s^{\prime\prime}_1$ requires a witness in $S'$, which can only be $\sigma_1$ or $\sigma_2$ since all other sets in $S^\prime$ are also elements of $S$. Without loss of generality assume that $s^{\prime\prime}_1$ is this witness: $ \sigma_1\subseteq s^{\prime\prime}_1\subsetneq s$.

Given that by hypothesis $\sigma_{1}$ is maximal dominated by $S$ and $\sigma_{1} \subsetneq s_i$, if $|s_i|=q$ then $|\sigma_1|=q-1$. This enforces $|s^{\prime\prime}_1|=q-1$, and thus $s^{\prime\prime}_1$ must coincide with $\sigma_{1}$ ($ s^{\prime\prime}_1 = \sigma_{1}$ and hence $\sigma_{1} \in S^{\prime\prime}$).

In order to have $S^{\prime\prime} \neq S'$, there must then exist some other set $s^{\prime\prime}_{2}\in S^{\prime\prime}$, such that $s^{\prime\prime}_{2} \not\in S'$ (thus $s^{\prime\prime}_{2}\neq \sigma_1$ and $s^{\prime\prime}_{2}\neq \sigma_2$). This set $s^{\prime\prime}_{2}$ cannot be $s$, the set removed from $S$, since we have just concluded that $\sigma_1 \in S^{\prime\prime}$ and given that $\sigma_{1} \subseteq s$ by Rule 3 condition (c), this would entail that $S^{\prime\prime}$ contained comparable sets.
The end of the proof now follows similarly to that of Rule 2. This $s^{\prime\prime}_{2}$ needs a witness in $S'$, and this witness can only be a common set of $S$ and $S'$, which leads to a contradiction by leaving this witness without its own witness in $S^{\prime\prime}$. This proves that sets generated by Rule 3 are indeed immediate parents of $S$ in $\mathcal{S}_p$.

\medskip
Lastly, we prove that no other possible immediate parents of $S$ in $\mathcal{S}_p$ exist.

Let us consider $S^{\prime}$ an immediate parent of $S$ in $\mathcal{S}_p$ ($S \preccurlyeq S^{\prime}$), either $S$ is integrally contained in $S^{\prime}$, that is every set $s\in S$ is also in $S^{\prime}$, or there is at least one set in $S$ that is not in $S^{\prime}$. Towards a contradiction, let us suppose that $S^{\prime}$ is an immediate parent of $S$ not generated by any of our three rules.

If $S$ is integrally contained in $S^{\prime}$ then $S^{\prime}=S \cup \{s^{\prime}_1, \dots, s^{\prime}_{l}\}$ for some $l\geq 1$. Given that by hypothesis, $S^{\prime}$ is not generated by any of the 3 rules, either $k\geq 2$ or $k=1$ and $s^{\prime}_1$ is not a maximal set independent of $S$. 

If $k\geq 2$ then removing any set in $\{s^{\prime}_1, \dots, s^{\prime}_{l}\}$ yields a child of $S^{\prime}$ that is also a parent of $S$, hence $S^{\prime}$ is not an immediate parent of $S$.

If $k=1$ and $s^{\prime}_1$ is not a maximal set independent of $S$, then there exists $\sigma$, a maximal set independent of $S$ such that $s^{\prime}_1\subsetneq \sigma$. The set $S\cup\{\sigma\}$ obtained from $S^{\prime}=S\cup\{s^{\prime}_1\}$ by replacing $s^{\prime}_1$ by $\sigma$ yields a child of $S^{\prime}$ that is a parent of $S$, hence $S^{\prime}$ is once again not an immediate parent of $S$.

Now let us suppose that $S^{\prime}$ is an immediate parent of $S$ such that $S$ is not integrally contained in $S^{\prime}$. Let $s$ be a set of $S$ that is not in $S^{\prime}$, and let $s^{\prime}$ be its witness in $S^\prime$. This tells us that $S^{\prime}$ is either of the form $S^{\prime}= S \setminus \{s_1,\dots, s_k\} \cup \{s^{\prime}\}$ for some $k \geq 1$, or of the form $S^{\prime}= S \setminus \{s_1, \dots, s_k\} \cup \{s^\prime , s^{\prime}_1,s^{\prime}_2, \dots, s^{\prime}_l \}$ for some $k \geq 1$ and $l \geq 1$, where $\{s_1, \dots, s_k\}$ contains $s$, all sets in $S$ that contain $s^{\prime}$ and possibly other sets of $S$ as well.

If $S^{\prime}= S \setminus \{s_1,\dots, s_k\} \cup \{s^{\prime}\}$, then $s^{\prime}$ must be a witness for all sets $s_1, \dots, s_k$. As $S^{\prime}$ is not generated by any of the rules, either $s^{\prime}$ is not a maximal set dominated by $S$ or it is contained in some maximal set independent of $S$.
If $s^{\prime}$ is contained in a maximal set independent of $S$, $\sigma$, then $S \preccurlyeq S\cup\{\sigma\} \preccurlyeq S^{\prime}$, which leads to a contradiction with the assumption that $S^{\prime}$ is an immediate parent of $S$.
If $s^{\prime}$ is not a maximal set dominated by $S$, it is contained in a maximal set dominated by $S$, $\sigma$, and $S \preccurlyeq S \setminus \{s \in S : d \subseteq s\} \cup \{d\} \preccurlyeq S^{\prime\prime}$, which again leads to a contradiction with the assumption that $S^{\prime}$ is an immediate parent of $S$.

A similar reasoning applies if $S^{\prime}= S \setminus \{s_1, \dots, s_k\} \cup \{s^\prime , s^{\prime}_1,s^{\prime}_2, \dots, s^{\prime}_l \}$ for some $k \geq 1$ and $l \geq 1$. Let us suppose that $s^{\prime}$ is not a maximal set dominated by $S$ and let $\sigma$ be a maximal set dominated by $S$ that contains $s^{\prime}$. 
This implies: $ S \preccurlyeq S\setminus \bigl(\{s \in S : \sigma \subseteq s\}\bigr)\cup \{\sigma\} 
\cup \{s^{\prime}_i \in S^{\prime} : s^{\prime}_i \not\subseteq \sigma \} \preccurlyeq S^{\prime}$. This set laying between $S$ and $S^{\prime}$ 
is necessarily distinct from $S^{\prime}$ because it contains $\sigma$, and it is a cover of $\{1,\dots,p\}$ because it is endowed with the sets in $\{s^{\prime}_i \in S^{\prime} : s^{\prime}_i \not\subseteq \sigma \}$. Hence, we again have a contradiction with the assumption that $S^{\prime}$ is an immediate parent of $S$. 

Therefore, there is no other form of immediate parent than those defined by our 3 rules.
\end{proof}

\newpage
\section{\label{sec:app:examples}Illustration of proposed rules}
\begin{figure}[h!]
\begin{center}
\resizebox{0.85\textwidth}{!}{%
\begin{tabular}{|c|c|}\hline
\begin{minipage}{0.475\textwidth}\centering {\small
$\begin{array}{l@{\hspace{0cm}}r@{\hspace{0cm}}ll}
{\rm set}&\multicolumn{3}{l}{ S_1 = \{\{1,2,3\},\{3,4\}\}} \\
\multirow{2}{*}{parents}&\ldelim \{ {2}{3 mm} 
 & \{\{1,2,3\},\{1,2,4\},\{3,4\}\}&Rule1\\
& & \{\{1,3\},\{2,3\},\{3,4\}\}&Rule3\\
\end{array}$}
\end{minipage}
&
\begin{minipage}{0.475\textwidth}\centering \resizebox{\textwidth}{!}{
$\begin{array}{l@{\hspace{0.1cm}}r@{\hspace{0.1cm}}ll}
{\rm set}& \multicolumn{3}{l}{S_2 = \{\{1,2,3\},\{1,2,4\},\{3,4\}\}} \\
\multirow{3}{*}{children}&\ldelim \{ {3}{2 mm} 
 &\{\{1,2,4\},\{3,4\}\}&Rule1\\
& &\{\{1,2,3\},\{3,4\}\}&Rule1\\
& &\{\{1,2,3\},\{1,2,4\},\{1,3,4\},\{2,3,4\}\}&Rule2\\
\end{array}$}
\end{minipage}\\
\begin{minipage}{0.45\textwidth}
	\resizebox{\textwidth}{!}{ 
		\begin{tikzpicture}
			\tikzstyle{term} = [draw,shape=rectangle,minimum height=2em,text centered, fill, color=gray!20]
			\tikzstyle{term2} = [draw,shape=rectangle,minimum height=2em,text centered, fill, color=orange]
			\tikzstyle{term32} = [draw,ellipse,minimum height=2em,text centered, fill, color=green!50]
			\tikzstyle{term4} = [draw,shape=rectangle,minimum height=2em,text centered, fill, color=violet!50]
			\tikzstyle{term42} = [draw,ellipse,minimum height=2em,text centered, fill, color=violet!50]
			\tikzstyle{term5} = [draw,shape=rectangle,minimum height=2em,text centered, fill, color=blue!50]
			\tikzstyle{link} = [-]
			\def\xspace{1.5}
			\def\yspace{-1.5}
			
			\node[term] (1234) at (0,0){\textcolor{black}{$\{1,2,3,4\}$}};
			\node[term2](123) at (-2.5*\xspace,\yspace) {\textcolor{black}{$\{1,2,3\}$}};
			\node[term32](124) at (-1*\xspace,\yspace) {\textcolor{black}{$\{1,2,4\}$}};
			\node[term2](134) at (1*\xspace,\yspace) {\textcolor{black}{$\{1,3,4\}$}};
			\node[term2](234) at (2.5*\xspace,\yspace) {\textcolor{black}{$\{2,3,4\}$}};
			\node[term4](12) at (-4*\xspace,2*\yspace) {\textcolor{black}{$\{1,2\}$}};	
			\node[term42](13) at (-2.5*\xspace,2*\yspace) {\textcolor{black}{$\{1,3\}$}};
			\node[term4](14) at (-1*\xspace,2*\yspace) {\textcolor{black}{$\{1,4\}$}};
			\node[term42](23) at (1*\xspace,2*\yspace) {\textcolor{black}{$\{2,3\}$}};
			\node[term4](24) at (2.5*\xspace,2*\yspace) {\textcolor{black}{$\{2,4\}$}};	
			\node[term42](34) at (4*\xspace,2*\yspace) {\textcolor{black}{$\{3,4\}$}};	
			\node[term4](1) at (-2.5*\xspace,3*\yspace) {\textcolor{black}{$\{1\}$}};
			\node[term4](2) at (-1*\xspace,3*\yspace) {\textcolor{black}{$\{2\}$}};
			\node[term4](3) at (1*\xspace,3*\yspace) {\textcolor{black}{$\{3\}$}};
			\node[term4](4) at (2.5*\xspace,3*\yspace) {\textcolor{black}{$\{4\}$}};
			\node[term4](0) at (0,4*\yspace) {\textcolor{black}{$\{\emptyset\}$}};
			
			\draw[link] (1234) edge (123);
			\draw[link] (1234) edge (124);
			\draw[link] (1234) edge (134);	
			\draw[link] (1234) edge (234);	
			\draw[link] (123) edge (12);
			\draw[link] (123) edge (13);
			\draw[link] (123) edge (23);
			\draw[link] (124) edge (12);
			\draw[link] (124) edge (14);
			\draw[link] (124) edge (24);	
			\draw[link] (134) edge (13);
			\draw[link] (134) edge (14);
			\draw[link] (134) edge (34);
			\draw[link] (234) edge (23);
			\draw[link] (234) edge (34);
			\draw[link] (234) edge (24);	
			\draw[link] (1) edge (12);
			\draw[link] (1) edge (13);
			\draw[link] (1) edge (14);
			\draw[link] (2) edge (12);
			\draw[link] (2) edge (23);
			\draw[link] (2) edge (24);
			\draw[link] (3) edge (13);
			\draw[link] (3) edge (23);
			\draw[link] (3) edge (34);
			\draw[link] (4) edge (14);
			\draw[link] (4) edge (24);
			\draw[link] (4) edge (34);
			\draw[link] (0) edge (1);
			\draw[link] (0) edge (2);
			\draw[link] (0) edge (3);
			\draw[link] (0) edge (4);
	\end{tikzpicture}}
\end{minipage}& \begin{minipage}{0.45\textwidth}
	\resizebox{\textwidth}{!}{ 
		\begin{tikzpicture}
			\tikzstyle{term} = [draw,shape=rectangle,minimum height=2em,text centered, fill, color=gray!20]
			\tikzstyle{term2} = [draw,shape=rectangle,minimum height=2em,text centered, fill, color=orange]
			\tikzstyle{term3} = [draw,shape=rectangle,minimum height=2em,text centered, fill, color=green!50]
			\tikzstyle{term32} = [draw,ellipse,minimum height=2em,text centered, fill, color=green!50]
			\tikzstyle{term4} = [draw,shape=rectangle,minimum height=2em,text centered, fill, color=violet!50]
			\tikzstyle{term42} = [draw,ellipse,minimum height=2em,text centered, fill, color=violet!50]
			\tikzstyle{term5} = [draw,shape=rectangle,minimum height=2em,text centered, fill, color=blue!50]
			\tikzstyle{link} = [-]
			\def\xspace{1.5}
			\def\yspace{-1.5}
			
			\node[term] (1234) at (0,0){\textcolor{black}{$\{1,2,3,4\}$}};
			\node[term2](123) at (-2.5*\xspace,\yspace) {\textcolor{black}{$\{1,2,3\}$}};
			\node[term32](124) at (-1*\xspace,\yspace) {\textcolor{black}{$\{1,2,4\}$}};
			\node[term](134) at (1*\xspace,\yspace) {\textcolor{black}{$\{1,3,4\}$}};
			\node[term](234) at (2.5*\xspace,\yspace) {\textcolor{black}{$\{2,3,4\}$}};
			\node[term4](12) at (-4*\xspace,2*\yspace) {\textcolor{black}{$\{1,2\}$}};	
			\node[term42](13) at (-2.5*\xspace,2*\yspace) {\textcolor{black}{$\{1,3\}$}};
			\node[term3](14) at (-1*\xspace,2*\yspace) {\textcolor{black}{$\{1,4\}$}};
			\node[term42](23) at (1*\xspace,2*\yspace) {\textcolor{black}{$\{2,3\}$}};
			\node[term3](24) at (2.5*\xspace,2*\yspace) {\textcolor{black}{$\{2,4\}$}};	
			\node[term2](34) at (4*\xspace,2*\yspace) {\textcolor{black}{$\{3,4\}$}};	
			\node[term4](1) at (-2.5*\xspace,3*\yspace) {\textcolor{black}{$\{1\}$}};
			\node[term4](2) at (-1*\xspace,3*\yspace) {\textcolor{black}{$\{2\}$}};
			\node[term4](3) at (1*\xspace,3*\yspace) {\textcolor{black}{$\{3\}$}};
			\node[term4](4) at (2.5*\xspace,3*\yspace) {\textcolor{black}{$\{4\}$}};
			\node[term4](0) at (0,4*\yspace) {\textcolor{black}{$\{\emptyset\}$}};
			
			\draw[link] (1234) edge (123);
			\draw[link] (1234) edge (124);
			\draw[link] (1234) edge (134);	
			\draw[link] (1234) edge (234);	
			\draw[link] (123) edge (12);
			\draw[link] (123) edge (13);
			\draw[link] (123) edge (23);
			\draw[link] (124) edge (12);
			\draw[link] (124) edge (14);
			\draw[link] (124) edge (24);	
			\draw[link] (134) edge (13);
			\draw[link] (134) edge (14);
			\draw[link] (134) edge (34);
			\draw[link] (234) edge (23);
			\draw[link] (234) edge (34);
			\draw[link] (234) edge (24);	
			\draw[link] (1) edge (12);
			\draw[link] (1) edge (13);
			\draw[link] (1) edge (14);
			\draw[link] (2) edge (12);
			\draw[link] (2) edge (23);
			\draw[link] (2) edge (24);
			\draw[link] (3) edge (13);
			\draw[link] (3) edge (23);
			\draw[link] (3) edge (34);
			\draw[link] (4) edge (14);
			\draw[link] (4) edge (24);
			\draw[link] (4) edge (34);
			\draw[link] (0) edge (1);
			\draw[link] (0) edge (2);
			\draw[link] (0) edge (3);
			\draw[link] (0) edge (4);
	\end{tikzpicture}}
\end{minipage}
\\\hline

\begin{minipage}{0.475\textwidth}\centering \resizebox{\textwidth}{!}{
$\begin{array}{l@{\hspace{0.1cm}}r@{\hspace{0.1cm}}ll}
{\rm set}& \multicolumn{3}{l}{S_3=\{\{1,2,3\},\{1,3,4\},\{2,3,4\}\}} \\
\multirow{4}{*}{parents}&\ldelim \{ {4}{3 mm} 
 &\{\{1,3\},\{2,3,4\}\}&Rule2\\	
& &\{\{1,2,3\},\{3,4\}\}&Rule2\\
& &\{\{1,2,3\},\{1,2,4\},\{1,3,4\},\{2,3,4\}\}&Rule1\\
& &\{\{1,3,4\},\{2,3\}\}&Rule2\\
\end{array}$}
\end{minipage}
&
\begin{minipage}{0.475\textwidth}\centering {\small
$\begin{array}{l@{\hspace{0.1cm}}r@{\hspace{0.1cm}}ll}
{\rm set}& \multicolumn{3}{l}{S_4 = \{\{1,3\},\{2,3\},\{3,4\}\}} \\
\multirow{3}{*}{children}&\ldelim \{ {3}{4 mm}
  &\{\{1,3\},\{2,3,4\}\}&Rule3\\
& &\{\{3,4\},\{1,2,3\}\}&Rule3\\
& &\{\{2,3\},\{1,3,4\}\}&Rule3\\
		\end{array}$}
\end{minipage}\\
\begin{minipage}{0.45\textwidth}
		\resizebox{\textwidth}{!}{ 
			\begin{tikzpicture}
			\tikzstyle{term} = [draw,shape=rectangle,minimum height=2em,text centered, fill, color=gray!20]
			\tikzstyle{term2} = [draw,shape=rectangle,minimum height=2em,text centered, fill, color=orange]
			\tikzstyle{term3} = [draw,shape=rectangle,minimum height=2em,text centered, fill, color=green!50]
			\tikzstyle{term32} = [draw,ellipse,minimum height=2em,text centered, fill, color=green!50]
			\tikzstyle{term4} = [draw,shape=rectangle,minimum height=2em,text centered, fill, color=violet!50]
			\tikzstyle{term42} = [draw,ellipse,minimum height=2em,text centered, fill, color=violet!50]
			\tikzstyle{term5} = [draw,shape=rectangle,minimum height=2em,text centered, fill, color=blue!50]
				\tikzstyle{link} = [-]
				\def\xspace{1.5}
				\def\yspace{-1.5}
				
				\node[term] (1234) at (0,0){\textcolor{black}{$\{1,2,3,4\}$}};
				\node[term2](123) at (-2.5*\xspace,\yspace) {\textcolor{black}{$\{1,2,3\}$}};
				\node[term2](124) at (-1*\xspace,\yspace) {\textcolor{black}{$\{1,2,4\}$}};
				\node[term](134) at (1*\xspace,\yspace) {\textcolor{black}{$\{1,3,4\}$}};
				\node[term](234) at (2.5*\xspace,\yspace) {\textcolor{black}{$\{2,3,4\}$}};
				\node[term42](12) at (-4*\xspace,2*\yspace) {\textcolor{black}{$\{1,2\}$}};	
				\node[term42](13) at (-2.5*\xspace,2*\yspace) {\textcolor{black}{$\{1,3\}$}};
				\node[term42](14) at (-1*\xspace,2*\yspace) {\textcolor{black}{$\{1,4\}$}};
				\node[term42](23) at (1*\xspace,2*\yspace) {\textcolor{black}{$\{2,3\}$}};
				\node[term42](24) at (2.5*\xspace,2*\yspace) {\textcolor{black}{$\{2,4\}$}};	
				\node[term2](34) at (4*\xspace,2*\yspace) {\textcolor{black}{$\{3,4\}$}};	
				\node[term4](1) at (-2.5*\xspace,3*\yspace) {\textcolor{black}{$\{1\}$}};
				\node[term4](2) at (-1*\xspace,3*\yspace) {\textcolor{black}{$\{2\}$}};
				\node[term4](3) at (1*\xspace,3*\yspace) {\textcolor{black}{$\{3\}$}};
				\node[term4](4) at (2.5*\xspace,3*\yspace) {\textcolor{black}{$\{4\}$}};
				\node[term4](0) at (0,4*\yspace) {\textcolor{black}{$\{\emptyset\}$}};
				
				\draw[link] (1234) edge (123);
				\draw[link] (1234) edge (124);
				\draw[link] (1234) edge (134);	
				\draw[link] (1234) edge (234);	
				\draw[link] (123) edge (12);
				\draw[link] (123) edge (13);
				\draw[link] (123) edge (23);
				\draw[link] (124) edge (12);
				\draw[link] (124) edge (14);
				\draw[link] (124) edge (24);	
				\draw[link] (134) edge (13);
				\draw[link] (134) edge (14);
				\draw[link] (134) edge (34);
				\draw[link] (234) edge (23);
				\draw[link] (234) edge (34);
				\draw[link] (234) edge (24);	
				\draw[link] (1) edge (12);
				\draw[link] (1) edge (13);
				\draw[link] (1) edge (14);
				\draw[link] (2) edge (12);
				\draw[link] (2) edge (23);
				\draw[link] (2) edge (24);
				\draw[link] (3) edge (13);
				\draw[link] (3) edge (23);
				\draw[link] (3) edge (34);
				\draw[link] (4) edge (14);
				\draw[link] (4) edge (24);
				\draw[link] (4) edge (34);
				\draw[link] (0) edge (1);
				\draw[link] (0) edge (2);
				\draw[link] (0) edge (3);
				\draw[link] (0) edge (4);
		\end{tikzpicture}}
	\end{minipage}&\begin{minipage}{0.45\textwidth}
		\resizebox{\textwidth}{!}{ 
			\begin{tikzpicture}
			\tikzstyle{term} = [draw,shape=rectangle,minimum height=2em,text centered, fill, color=gray!20]
			\tikzstyle{term2} = [draw,shape=rectangle,minimum height=2em,text centered, fill, color=orange]
			\tikzstyle{term3} = [draw,shape=rectangle,minimum height=2em,text centered, fill, color=green!50]
			\tikzstyle{term32} = [draw,ellipse,minimum height=2em,text centered, fill, color=green!50]
			\tikzstyle{term4} = [draw,shape=rectangle,minimum height=2em,text centered, fill, color=violet!50]
			\tikzstyle{term42} = [draw,ellipse,minimum height=2em,text centered, fill, color=violet!50]
			\tikzstyle{term5} = [draw,shape=rectangle,minimum height=2em,text centered, fill, color=blue!50]
				\tikzstyle{link} = [-]
				\def\xspace{1.5}
				\def\yspace{-1.5}
				
				\node[term] (1234) at (0,0){\textcolor{black}{$\{1,2,3,4\}$}};
				\node[term](123) at (-2.5*\xspace,\yspace) {\textcolor{black}{$\{1,2,3\}$}};
				\node[term32](124) at (-1*\xspace,\yspace) {\textcolor{black}{$\{1,2,4\}$}};
				\node[term](134) at (1*\xspace,\yspace) {\textcolor{black}{$\{1,3,4\}$}};
				\node[term](234) at (2.5*\xspace,\yspace) {\textcolor{black}{$\{2,3,4\}$}};
				\node[term3](12) at (-4*\xspace,2*\yspace) {\textcolor{black}{$\{1,2\}$}};	
				\node[term2](13) at (-2.5*\xspace,2*\yspace) {\textcolor{black}{$\{1,3\}$}};
				\node[term3](14) at (-1*\xspace,2*\yspace) {\textcolor{black}{$\{1,4\}$}};
				\node[term2](23) at (1*\xspace,2*\yspace) {\textcolor{black}{$\{2,3\}$}};
				\node[term3](24) at (2.5*\xspace,2*\yspace) {\textcolor{black}{$\{2,4\}$}};	
				\node[term2](34) at (4*\xspace,2*\yspace) {\textcolor{black}{$\{3,4\}$}};	
				\node[term42](1) at (-2.5*\xspace,3*\yspace) {\textcolor{black}{$\{1\}$}};
				\node[term42](2) at (-1*\xspace,3*\yspace) {\textcolor{black}{$\{2\}$}};
				\node[term42](3) at (1*\xspace,3*\yspace) {\textcolor{black}{$\{3\}$}};
				\node[term42](4) at (2.5*\xspace,3*\yspace) {\textcolor{black}{$\{4\}$}};
				\node[term4](0) at (0,4*\yspace) {\textcolor{black}{$\{\emptyset\}$}};
				
				\draw[link] (1234) edge (123);
				\draw[link] (1234) edge (124);
				\draw[link] (1234) edge (134);	
				\draw[link] (1234) edge (234);	
				\draw[link] (123) edge (12);
				\draw[link] (123) edge (13);
				\draw[link] (123) edge (23);
				\draw[link] (124) edge (12);
				\draw[link] (124) edge (14);
				\draw[link] (124) edge (24);	
				\draw[link] (134) edge (13);
				\draw[link] (134) edge (14);
				\draw[link] (134) edge (34);
				\draw[link] (234) edge (23);
				\draw[link] (234) edge (34);
				\draw[link] (234) edge (24);	
				\draw[link] (1) edge (12);
				\draw[link] (1) edge (13);
				\draw[link] (1) edge (14);
				\draw[link] (2) edge (12);
				\draw[link] (2) edge (23);
				\draw[link] (2) edge (24);
				\draw[link] (3) edge (13);
				\draw[link] (3) edge (23);
				\draw[link] (3) edge (34);
				\draw[link] (4) edge (14);
				\draw[link] (4) edge (24);
				\draw[link] (4) edge (34);
				\draw[link] (0) edge (1);
				\draw[link] (0) edge (2);
				\draw[link] (0) edge (3);
				\draw[link] (0) edge (4);
		\end{tikzpicture}}
	\end{minipage}\\\hline

\multicolumn{2}{|c|}{\begin{minipage}{0.95\textwidth}
\resizebox{\textwidth}{!}{ 
\begin{tikzpicture}
	\tikzstyle{forange} = [draw,color=orange,fill]
	\tikzstyle{fwhite} = [draw]
	\def\xspace{2.8}
	\def\yspace{-2.7}

\node[forange] (0) at (0,4*\yspace) 
    {\textcolor{black}{$\{1,2,3\},\{1,3,4\},\{2,3,4\}_{S1}$}};
\node[fwhite] (1) at (-2.5*\xspace,3*\yspace) 
    {\textcolor{black}{$\{1,3\},\{2,3,4\}$}};
\node[forange] (2) at (-1*\xspace,3*\yspace) 
    {\textcolor{black}{$\{1,2,3\},\{3,4\}_{S2}$}};
\node[fwhite]  (3) at (0.75*\xspace,3*\yspace) 
    {\textcolor{black}{$\{1,2,3\},\{1,2,4\},\{1,3,4\},\{2,3,4\}$}};
\node[fwhite] (4) at (2.5*\xspace,3*\yspace) 
    {\textcolor{black}{$\{1,3,4\},\{2,3\}$}};
\node[fwhite]  (5) at (-4*\xspace,2*\yspace) 
    {\textcolor{black}{$\{1,2,4\},\{1,3\},\{2,3,4\}$}};
\node[fwhite]  (6) at (-2.5*\xspace,2*\yspace) 
    {\textcolor{black}{$\{1,2\},\{1,3,4\},\{2,3,4\}$}};
\node[forange]  (7) at (-1*\xspace,2*\yspace) 
    {\textcolor{black}{$\{1,2,3\},\{1,2,4\},\{3,4\}_{S3}$}};
\node[fwhite]  (8) at (1*\xspace,2*\yspace) 
    {\textcolor{black}{$\{1,2,3\},\{1,4\},\{2,3,4\}$}};
\node[fwhite]  (9) at (2.5*\xspace,2*\yspace) 
    {\textcolor{black}{$\{1,2,4\},\{1,3,4\},\{2,3\}$}};
\node[fwhite] (10) at (4*\xspace,2*\yspace) 
    {\textcolor{black}{$\{1,2,3\},\{1,3,4\},\{2,4\}$}};
\node[fwhite]  (11) at (-1.5*\xspace,\yspace)
    {\textcolor{black}{$\{1,2,4\},\{1,3\},\{2,3\}$}};
\node[forange] (12) at (0*\xspace,\yspace) 
    {\textcolor{black}{$\{1,3\},\{2,3\},\{3,4\}_{S4}$}};
\node[fwhite]  (14) at (1.5*\xspace,\yspace)
    {\textcolor{black}{$\{1,2,4\},\{1,3\},\{3,4\}$}};
\node[fwhite]  (15) at (3*\xspace,\yspace)
    {\textcolor{black}{$\{1,2,4\},\{2,3\},\{3,4\}$}};
\node[fwhite] (13) at (0*\xspace,0*\yspace)
    {\textcolor{black}{$\{1,2,4\},\{1,3\},\{2,3\},\{3,4\}$}};

\draw[-] (0) edge node[sloped,above] {\bf R2} (1);
\draw[-] (0) edge node[sloped,above] {\bf R2} (2);
\draw[-] (0) edge node[sloped,above] {\bf R1} (3);
\draw[-] (0) edge node[sloped,above] {\bf R2} (4);
\draw[-] (1) edge node[pos=0.3,sloped,above] {\bf R1} (5);
\draw[-] (2) edge node[pos=0.8] {\bf R1} (7);
\draw[-] (4) edge node[pos=0.2] {\bf R1} (9);
\draw[-] (3) edge node[pos=0.2] {\bf R2} (5);
\draw[-] (3) edge node[pos=0.2] {\bf R2} (6);
\draw[-] (3) edge node[pos=0.2] {\bf R2} (7);
\draw[-] (3) edge node[pos=0.2] {\bf R2} (8);
\draw[-] (3) edge node[pos=0.2] {\bf R2} (9);
\draw[-] (3) edge node[pos=0.2] {\bf R2} (10);
\draw[-] (1) edge node[pos=0.1,sloped,above] {\bf R3} (12);
\draw[-] (2) edge node[pos=0.75,sloped,above] {\bf R3} (12);
\draw[-] (4) edge node[pos=0.85,sloped,above] {\bf R3} (12);
\draw[-] (5) edge node[sloped,above] {\bf R2} (11);
\draw[-] (5) edge node[sloped,above] {\bf R2} (14);
\draw[-] (7) edge node[sloped,above] {\bf R2} (15);
\draw[-] (9) edge node[pos=0.1,sloped,above] {\bf R2} (11);
\draw[-] (9) edge node[sloped,above] {\bf R2} (15);
\draw[-] (11) edge node[above] {\bf R1} (13);
\draw[-] (12) edge node[left]  {\bf R1} (13);
\draw[-] (14) edge node[above] {\bf R1} (13);
\draw[-] (15) edge node[above] {\bf R1} (13);


\draw[dotted,thick] (0) -- (-0.2*\xspace,4.3*\yspace);
\draw[dotted,thick] (0) -- (0*\xspace,4.3*\yspace);
\draw[dotted,thick] (0) -- (0.2*\xspace,4.3*\yspace);

\draw[dotted,thick] (1) -- (-2.7*\xspace,3.3*\yspace);

\draw[dotted,thick] (4) -- (2.3*\xspace,3.3*\yspace);
\draw[dotted,thick] (4) -- (2.5*\xspace,3.3*\yspace);
\draw[dotted,thick] (4) -- (2.7*\xspace,3.3*\yspace);

\draw[dotted,thick] (5) -- (-4.2*\xspace,2.3*\yspace);
\draw[dotted,thick] (5) -- (-4*\xspace,2.3*\yspace);
\draw[dotted,thick] (5) -- (-3.8*\xspace,2.3*\yspace);

\draw[dotted,thick] (10) -- (3.8*\xspace,2.3*\yspace);
\draw[dotted,thick] (10) -- (4*\xspace,2.3*\yspace);
\draw[dotted,thick] (10) -- (4.2*\xspace,2.3*\yspace);

\draw[dotted,thick] (11) -- (-1.6*\xspace,1.3*\yspace);

\draw[dotted,thick] (11) -- (-1.8*\xspace,0.7*\yspace);
\draw[dotted,thick] (11) -- (-1.6*\xspace,0.7*\yspace);
\draw[dotted,thick] (11) -- (-1.4*\xspace,0.7*\yspace);

\draw[dotted,thick] (5) -- (-4.3*\xspace,1.7*\yspace);
\draw[dotted,thick] (5) -- (-4.1*\xspace,1.7*\yspace);
\draw[dotted,thick] (5) -- (-3.9*\xspace,1.7*\yspace);
\draw[dotted,thick] (5) -- (-3.7*\xspace,1.7*\yspace);

\draw[dotted,thick] (6) -- (-2.9*\xspace,1.7*\yspace);
\draw[dotted,thick] (6) -- (-2.7*\xspace,1.7*\yspace);
\draw[dotted,thick] (6) -- (-2.5*\xspace,1.7*\yspace);
\draw[dotted,thick] (6) -- (-2.3*\xspace,1.7*\yspace);
\draw[dotted,thick] (6) -- (-2.1*\xspace,1.7*\yspace);

\draw[dotted,thick] (7) -- (-1.3*\xspace,1.7*\yspace);
\draw[dotted,thick] (7) -- (-1.1*\xspace,1.7*\yspace);
\draw[dotted,thick] (7) -- (-0.9*\xspace,1.7*\yspace);
\draw[dotted,thick] (7) -- (-0.7*\xspace,1.7*\yspace);

\draw[dotted,thick] (7) -- (-1.3*\xspace,2.3*\yspace);

\draw[dotted,thick] (8) -- (0.6*\xspace,1.7*\yspace);
\draw[dotted,thick] (8) -- (0.8*\xspace,1.7*\yspace);
\draw[dotted,thick] (8) -- (1.0*\xspace,1.7*\yspace);
\draw[dotted,thick] (8) -- (1.2*\xspace,1.7*\yspace);
\draw[dotted,thick] (8) -- (1.4*\xspace,1.7*\yspace);

\draw[dotted,thick] (9) -- (2.3*\xspace,1.7*\yspace);
\draw[dotted,thick] (9) -- (2.5*\xspace,1.7*\yspace);
\draw[dotted,thick] (9) -- (2.8*\xspace,1.7*\yspace);

\draw[dotted,thick] (10) -- (3.6*\xspace,1.7*\yspace);
\draw[dotted,thick] (10) -- (3.8*\xspace,1.7*\yspace);
\draw[dotted,thick] (10) -- (4*\xspace,1.7*\yspace);
\draw[dotted,thick] (10) -- (4.2*\xspace,1.7*\yspace);
\draw[dotted,thick] (10) -- (4.4*\xspace,1.7*\yspace);

\draw[dotted,thick] (13) -- (-0.2*\xspace,0.3*\yspace);
\draw[dotted,thick] (13) -- (0*\xspace,0.3*\yspace);
\draw[dotted,thick] (13) -- (0.2*\xspace,0.3*\yspace);

\draw[dotted,thick] (14) -- (1.4*\xspace,0.7*\yspace);
\draw[dotted,thick] (14) -- (1.6*\xspace,0.7*\yspace);

\draw[dotted,thick] (14) -- (1.4*\xspace,1.3*\yspace);
\draw[dotted,thick] (14) -- (1.6*\xspace,1.3*\yspace);

\draw[dotted,thick] (13) -- (-0.3*\xspace,-0.3*\yspace);
\draw[dotted,thick] (13) -- (-0.1*\xspace,-0.3*\yspace);
\draw[dotted,thick] (13) -- (0.1*\xspace,-0.3*\yspace);
\draw[dotted,thick] (13) -- (0.3*\xspace,-0.3*\yspace);

\draw[dotted,thick] (15) -- (3.2*\xspace,0.7*\yspace);
\draw[dotted,thick] (15) -- (3.0*\xspace,0.7*\yspace);
\draw[dotted,thick] (15) -- (2.8*\xspace,0.7*\yspace);

\draw[dotted,thick] (15) -- (3.2*\xspace,1.3*\yspace);

\end{tikzpicture}}
\end{minipage}} \\ \hline
\end{tabular}}
\end{center}
\caption{\footnotesize{Complete illustration of the rules to compute parents (left) and children (right). Functions using the set representation on the Hasse diagram of the poset $(2^{\{1,2,3,4\}},\subseteq)$.
Orange vertices depict the elements of the starting sets ($S_i, i=1,2,3,4$); green vertices depict  independent sets, among which oval ones are maximal; violet vertices depict dominated sets, among which oval ones are maximal and not included in any independent set.
Subset of the Hasse diagram (below) representing the partial order between the functions presented, as well as some of their neighbouring relationships.}
   \label{fig:HD-rules}}
\end{figure}

\newpage
\section{\label{sec:app:algorithms}Algorithms}

Here, we provide a detailed description of Algorithm~\ref{alg:parents} and Algorithm~\ref{alg:children}.
These algorithms are implemented as a dedicated Python library and made available \url{https://github.com/ptgm/pyfunctionhood}, under the GNU General Public License v3.0 (GPL-3.0).

\subsection{Algorithm to compute immediate parents}\label{sec:app:alg_parents}

To compute the immediate parents, Algorithm~\ref{alg:parents} starts by computing the set $\mathcal{C}$ of maximal sets independent of $S$ (line \ref{line:p:indep}).
Next, lines~\ref{line:p:for_indep}--\ref{line:p:for_indep_end} iterate over the maximal set $\mathcal{C}$, generating one new immediate parent for each $c \in \mathcal{C}$, following Rule 1. Each new immediate parent generated is kept in $\mathcal{P}$.
Next, we compute the set $\mathcal{D}$ of maximal sets dominated by $S$ (line~\ref{line:p:maxdom}), where each element of $\mathcal{D}$ is not contained in another element of $\mathcal{D}$. Lines~\ref{line:p:maxdom_iter}--\ref{line:p:maxdom_iter_end} ensure that each element of $\mathcal{D}$ is not contained in any element of $\mathcal{C}$, {\it i.e.}, in any maximal set independent of $S$. Elements of $\mathcal{D}$ are candidates to be considered afterwards in the generation of immediate parents by Rule 2 or by Rule 3. 

Line~\ref{line:p:map} initialises a map relating elements $s \in S$ (the map keys) with elements of $d \in \mathcal{D}$ (the map values) which are included in $s$ and were not used by Rule 2.
Lines~\ref{line:p:iterD}--\ref{line:p:iterD_end} iterate over each element $d \in \mathcal{D}$, trying to find one complying with conditions from Rule 2. If $S$ deprived with elements containing $d$ and augmented with $d$ remains a cover, it is generated as a new immediate parent by Rule 2. Otherwise, the element $d$ is added to the map $\mathcal{D}notused$, associated with every set of $S$ that contains it. These elements in $\mathcal{D}notused$ will be candidates for Rule 3. 

Finally, lines~\ref{line:p:iterNotUsed}--\ref{line:p:iterNotUsed_end} iterate over the elements of $S$, the keys of $\mathcal{D}notused$, with associated elements of $\mathcal{D}$ that were not used by Rule 2, due to insufficient cover.
The two inner for loops (lines~\ref{line:p:innerLoop}--\ref{line:p:innerLoop_end}) do a pairwise combination of these associated elements, ensuring cover and therefore generating a new immediate parent by Rule 3.
If a given $s \in S$ is not present as a key in the map $\mathcal{D}notused$, or if it is present with a single associated $d$, then it has no immediate parent generated by Rule 3.

\subsection{Algorithm to compute immediate children}\label{sec:app:alg_children}

To compute the immediate children, Algorithm~\ref{alg:children} iterates over each element $s \in S$ (lines~\ref{line:c:iterS}--\ref{line:c:iterS_end}), considering $S \setminus \{s\}$ as the working child candidate.
It then iterates over each of the missing literals in $s$ (lines~\ref{line:c:missingLits}--\ref{line:c:missingLits_end}) to verify if each new element $s \cup \{l\}$ contains one, two or more elements of $S$.

If $s \cup \{l\}$ contains only one element, it has to be $s$ itself, meaning that the working child candidate can be extended with the new element $s \cup \{l\}$. The iteration over all missing literals ensures that each time the child candidate is extended with $s \cup \{l\}$. In this case, the flag $isExtendable$ is set to add the extended child candidate as a valid immediate child (lines~\ref{line:c:extendable}--\ref{line:c:extendable_end}).

On the other hand, if $s \cup \{l\}$ contains exactly two elements of $S$, $s$ and another element in $S$, it means that these two elements of $S$ can potentially be replaced by $s \cup \{l\}$. The $mergeable$ set keeps all these elements grouped by size (line~\ref{line:c:mergeable}), as candidates to be used by Rule 3.

Finally, if $s \cup \{l\}$ contains more than two elements of $S$, does not fall into any of the previous cases for every missing literal, and is still a cover, then it is because $s$ is a maximal set independent of $S \setminus \{s\}$.
The working child candidate $S \setminus \{s\}$ is then considered a valid immediate child (line~\ref{line:c:rule2}).

Lines~\ref{line:c:mergeableIter}--\ref{line:c:mergeableIter_end} iterate over the sizes of $mergeable$ elements of $S$ found previously. For each existing size $sz$, if extending one of the elements in $mergeable[sz]$ with a given literal $\{l\}$ contains exactly two elements, Rule 3 can be applied, considering $S \setminus absorbed \cup \{ s \cup \{l\}\}$ as a new valid immediate child (line~\ref{line:c:rule3}).

{\footnotesize \begin{algorithm}[t]
    \caption{Algorithm to Compute Immediate Children}
    \label{alg:children}
    \footnotesize
    \textbf{Input:} An element $S \in \mathcal{S}_p$\\
    \textbf{Output:} Set $\mathcal{C} \subseteq \mathcal{S}_p$ with all immediate children of $S$\\
    \begin{algorithmic}[1] 
        \STATE {\em getContainedBy($s$,$S$)} : set of all sets in $S$ contained in $s$
        \STATE {\em getMissingLits($s$)} : set of literals in $\{1,\dots, p\}$ missing in $s$
        \STATE {\em isCover($S$)} : True if $S$ is a cover of $\{1,\dots, p\}$

        \STATE $\mathcal{C} \longleftarrow \varnothing $
        \FOR{$s \in S$} \label{line:c:iterS}
            \STATE $toMerge \longleftarrow$ \texttt{false}
            \STATE $isExtendable \longleftarrow$ \texttt{false}
            \STATE $childCandidate \longleftarrow S \setminus \{s\}$
            \FOR{$l \in getMissingLits(s)$} \label{line:c:missingLits}
                \STATE $contained \longleftarrow getContainedBy(s \cup \{l\},S)$
                \IF{$|contained|=1$}
                    \STATE $isExtendable \longleftarrow$ \texttt{true} \label{line:c:extendable}
                    \STATE $childCandidate \longleftarrow childCandidate \cup \{s \cup \{l\}\}$ \label{line:c:extendable_end}
                \ENDIF
                \IF{$|absorbed|=2$}
                    \STATE $toMerge \longleftarrow$ \texttt{true}
                \ENDIF
            \ENDFOR \label{line:c:missingLits_end}
            \IF{$isExtendable$}
                \STATE $\mathcal{C} \longleftarrow \mathcal{C} \cup \{childCandidate\}$ \label{line:c:rule2} \hfill // Rule 2
            \ELSIF{$toMerge$}
                \STATE $mergeable[len(s)] \longleftarrow mergeable[len(s)] \cup \{s\}$ \label{line:c:mergeable} 
            \ENDIF
        \ENDFOR \label{line:c:iterS_end}
        \FOR{$sz \in mergeable$} \label{line:c:mergeableIter}
            \WHILE{$mergeable[sz] \neq \emptyset$}
                \STATE $s \longleftarrow mergeable[sz].top()$
                \FOR{$l \in getMissingLits(s)$}
                    \STATE $cnt \longleftarrow getContainedBy(s \cup \{l\},mergeable[sz])$
                    \IF{$|cnt|=2$}
                        \STATE $\mathcal{C}\longleftarrow \mathcal{C} \cup \{S\setminus cnt \cup \{s \cup \{l\}\}\}$ \label{line:c:rule3} \hfill // Rule 3
                    \ENDIF
                \ENDFOR
                \STATE $mergeable[sz].pop()$
            \ENDWHILE
        \ENDFOR \label{line:c:mergeableIter_end}
        \RETURN $\mathcal{C}$
    \end{algorithmic}
\end{algorithm}
}

\end{document}